\tikzset{
    right angle quadrant/.code={
        \pgfmathsetmacro\quadranta{{1,1,-1,-1}[#1-1]}
        \pgfmathsetmacro\quadrantb{{1,-1,-1,1}[#1-1]}},
    right angle quadrant=1,
    right angle length/.code={\def\rightanglelength{#1}},
    right angle length=2ex,
    right angle symbol/.style n args={3}{
        insert path={
            let \p0 = ($(#1)!(#3)!(#2)$) in
                let \p1 = ($(\p0)!\quadranta*\rightanglelength!(#3)$),
                \p2 = ($(\p0)!\quadrantb*\rightanglelength!(#2)$) in
                let \p3 = ($(\p1)+(\p2)-(\p0)$) in
            (\p1) -- (\p3) -- (\p2)
        }
    }
}
\newtheorem{theorem}{Theorem}
\newtheorem{lemma}{Lemma}
\newtheorem{corollary}{Corollary}
\DeclareMathOperator*{\argmax}{argmax}
\begin{document}

% If your paper is accepted and the title of your paper is very long,
% the style will print as headings an error message. Use the following
% command to supply a shorter title of your paper so that it can be
% used as headings.
%
%\runningtitle{I use this title instead because the last one was very long}

% If your paper is accepted and the number of authors is large, the
% style will print as headings an error message. Use the following
% command to supply a shorter version of the authors names so that
% they can be used as headings (for example, use only the surnames)
%
%\runningauthor{Surname 1, Surname 2, Surname 3, ...., Surname n}

\twocolumn[

\aistatstitle{Revealing the Basis:
Ordinal Embedding through Geometry}

\aistatsauthor{ Jesse Anderton \And Virgil Pavlu \And Javed Aslam }

\aistatsaddress{
	Northeastern University \\
	Boston, Massachusetts USA \\
	jesse@ccs.neu.edu	
\And
	Northeastern University \\
	Boston, Massachusetts USA \\
	vip@ccs.neu.edu	
\And
	Northeastern University \\
	Boston, Massachusetts USA \\
	jaa@ccs.neu.edu	
} ]

\begin{abstract}

Ordinal Embedding places $n$ objects into $\mathbb{R}^d$ based on
comparisons such as ``$a$ is closer to $b$ than $c$.''
Current optimization-based approaches suffer from scalability problems and an
abundance of low quality local optima.
We instead consider a computational geometric approach based on selecting
comparisons to discover points close to nearly-orthogonal ``axes'' and embed
the
whole set by their projections along each axis.
We thus also estimate the dimensionality of the data.
Our embeddings are of lower quality than the global optima of
optimization-based approaches, but are more scalable computationally and more
reliable
than local optima often found via optimization.
Our method uses $\Theta(n d \log n)$ comparisons and $\Theta(n^2 d^2)$
total operations, and can also be viewed as selecting constraints for an
optimizer which, if successful, will produce an almost-perfect embedding for
sufficiently dense datasets.

\end{abstract}

\section{Introduction}
\label{sec-intro}

Ordinal Geometry is a family of related problems that deals with a collection
of objects which are presumed to lie in some metric space with an unknown
metric. 
Unable to directly use features/positions, distances, or similarity scores, we
instead attempt to either recover the metric (positions or distances), or to
perform tasks such as
clustering or classification, using only ordinal comparisons of the form,
``Object $a$ is closer to object $b$ than to object $c$.''

For example, a collection of movies can be analyzed without extracting
features by asking movie-goers questions like,
``Is \texttt{Star Wars} more similar to \texttt{The Matrix} or to \texttt{Star
Trek}?''.
Similarly, one might analyze songs or musicians without having the
musical expertise to design a feature representation for either
by soliciting user preferences like,
``If you love \texttt{The Rolling Stones},
you will prefer \texttt{Bruce Springsteen} to \texttt{Madonna}.''
Similarity on food dishes might depend on factors like smell and taste, yet
it's not obvious how to obtain consistent features per dish; still, people
have no trouble comparing foods.

%Even when features exist, sometimes they are too many, too sparse, and too
%noisy to use efficiently, for example with document representation as word
%counts for the purpose of relevance or classification. It might be easier to
%consider rankings (which amount to comparisons) instead of nominal grades in
%order to reveal a truly desirable classification space for the documents.

These comparisons are commonly gathered from human assessors, who often find it
easier to report on relative similarity than to provide explicit distances.
In other cases, ordinal information can be inferred from known features as a
regularization, dimensionality reduction, or representation learning technique.
Ordinal Embedding, which seeks to recover the metric, has much in common with
Metric/Kernel Learning but differs in that we generally have no underlying
features.

In this paper we address the problem of recovering the object pairwise
distances (or simply ``recovering the metric'') using a minimum number of
comparisons, assuming the underlying metric is Euclidean (i.e. the objects
exist in an unknown Euclidean space) and that comparisons are always answered
correctly.
Note that recovering the full matrix of pairwise distances is mathematically
equivalent to an embedding into a space isomorphic with the original data space
up to scaling, rotation, reflection, and translation.
With the metric in hand we can run standard machine learning tools,
either by employing positions as vectors of latent features or by using the
metric as a similarity kernel.

\textbf{Problem Context.}
\citet{Jamieson:2011bt} proved a lower bound of $\Omega(n d \log n)$
adaptively selected comparisons to fully recover the underlying metric space.
While our work only fully recovers the metric for certain ``ideal''
datasets, it is the first published non-trivial algorithm with an
upper bound matching the lower bound and thus establishes a performance
baseline for future algorithms.

The theoretical results of \citet{Kleindessner:2014wu} and
\citet{AriasCastro:2015wk} prove that full metric recovery is possible given a
total ordering of the $k$-nearest neighbors ($k$NN) of each point, when the
dataset meets certain distributional requirements and for
$k = \omega((n \log n)^{1/2})$, but do not propose an algorithm to do so.
\citet{Hashimoto:2015wj} provide an algorithm to fully recover the metric when
$k = \omega(n^{2/(d+2)}(\log n)^{d/(d+2)})$,
and which performs well even when $k = \log n$.
The problem of comparison selection for full metric recovery is thus reduced to
efficiently identifying the $k$NN of each point.
However, we are not aware of any $k$NN algorithms using only $O(n d \log n)$
comparisons in the ordinal geometry setting.
Despite the vast literature on identifying the $k$NN for a set,
all the algorithms we have surveyed rely on known positions or distances.

The trivial way to obtain all possible comparisons is to sort all $n$
points by increasing distance to each of the $n$ possible ``heads,'' for
$O(n^2 \log n)$ comparisons.
Alternatively, a selection algorithm can be used to obtain all objects' $k$NN
using $O(n^2)$ comparisons.
In either case, an embedding algorithm can then be used to (hopefully) recover
the metric.
When $d = \Omega(n)$, this is the best that can be accomplished.

The present work contains the first non-trivial upper bound on the problem
when $d = o(n)$, matching the lower bound to
(1) produce an embedding to recover the metric down to scaling for
``ideal'' datasets,
and to
(2) produce the input required for a high-quality embedding for many
realistic ``non-ideal'' datasets.
We also study the properties a dataset needs for our method to work well,
showing where future work can generalize our approach to further classes
of problems.

Most existing work on ordinal embedding, such as Soft Ordinal Embedding, or SOE
\citep{Terada:2014tr}, focuses on optimizing a non-convex Machine Learning
objective within a Euclidean space of some assumed dimensionality.
We have empirically found that even when all comparisons are consistent with a
Euclidean metric and the optimizer is given the correct dimensionality,
optimization-based methods for ordinal embedding often
struggle to find a global optimum or to deal with large datasets.
We have also found that local optima tend to produce rankings that are
nearly-random (when sorting by distance to any given object).

\begin{figure*}[ht]
\centering
\includegraphics[width=.3\textwidth]{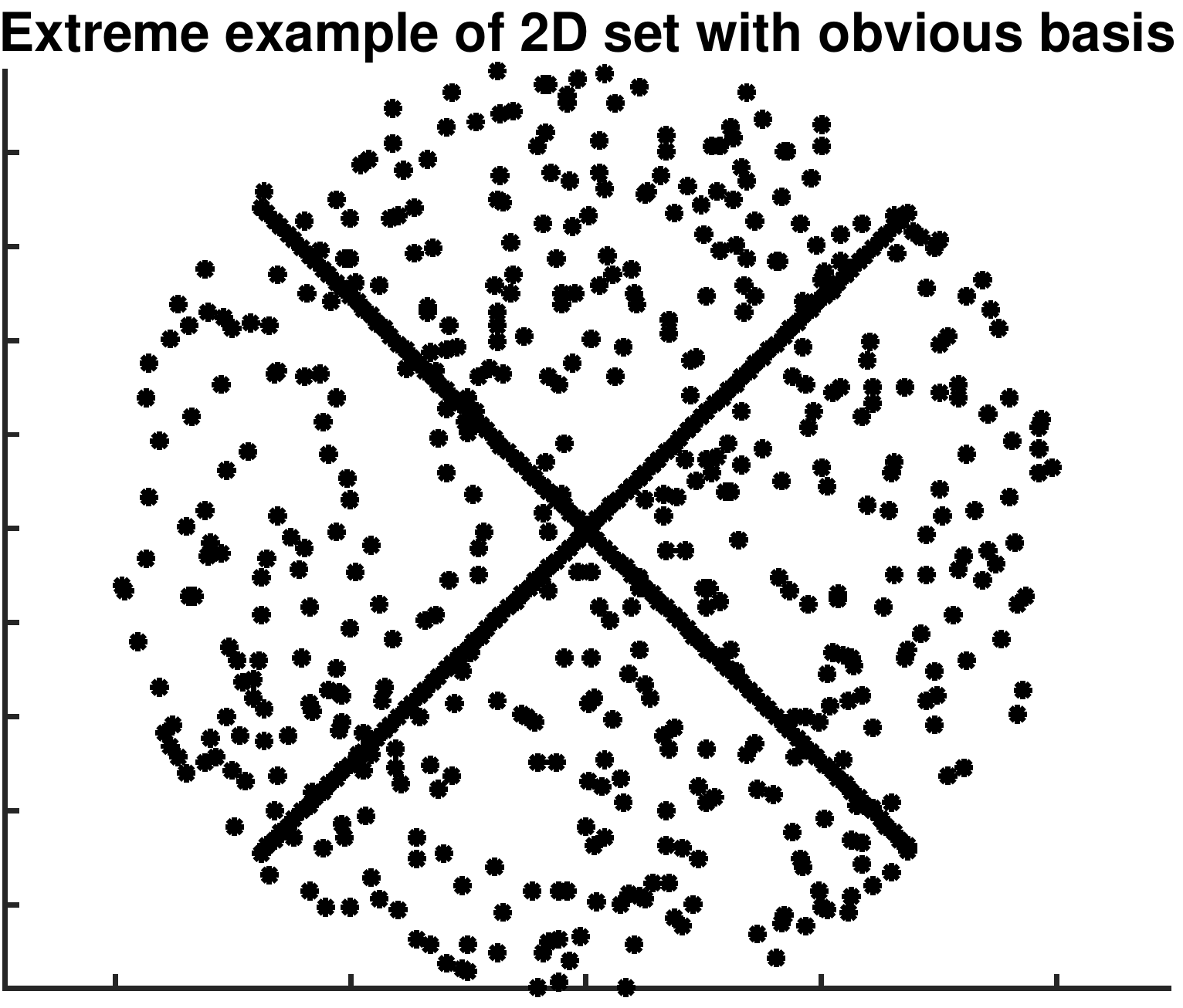}
\quad
\includegraphics[width=.3\textwidth]{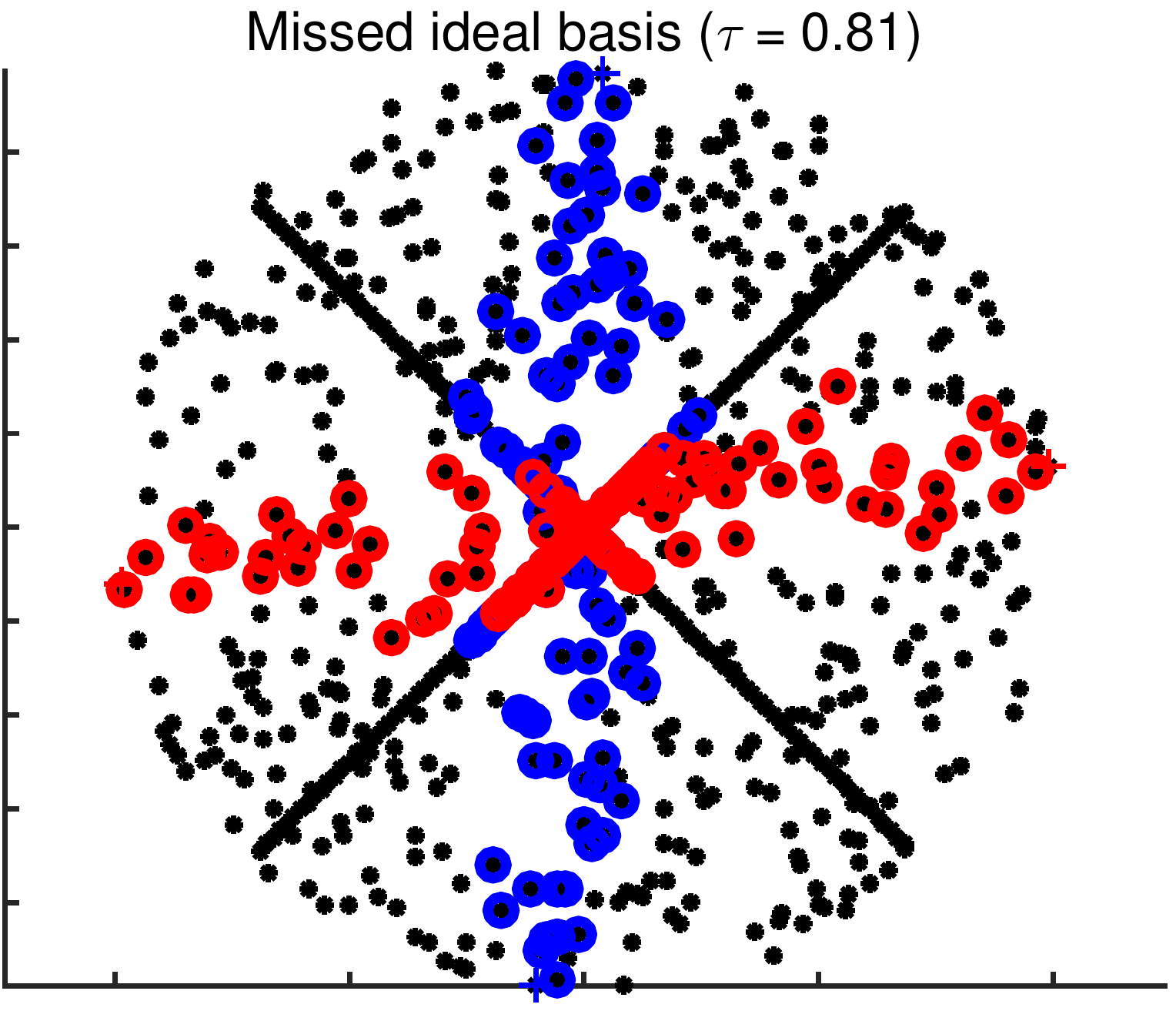}
\quad
\includegraphics[width=.3\textwidth]{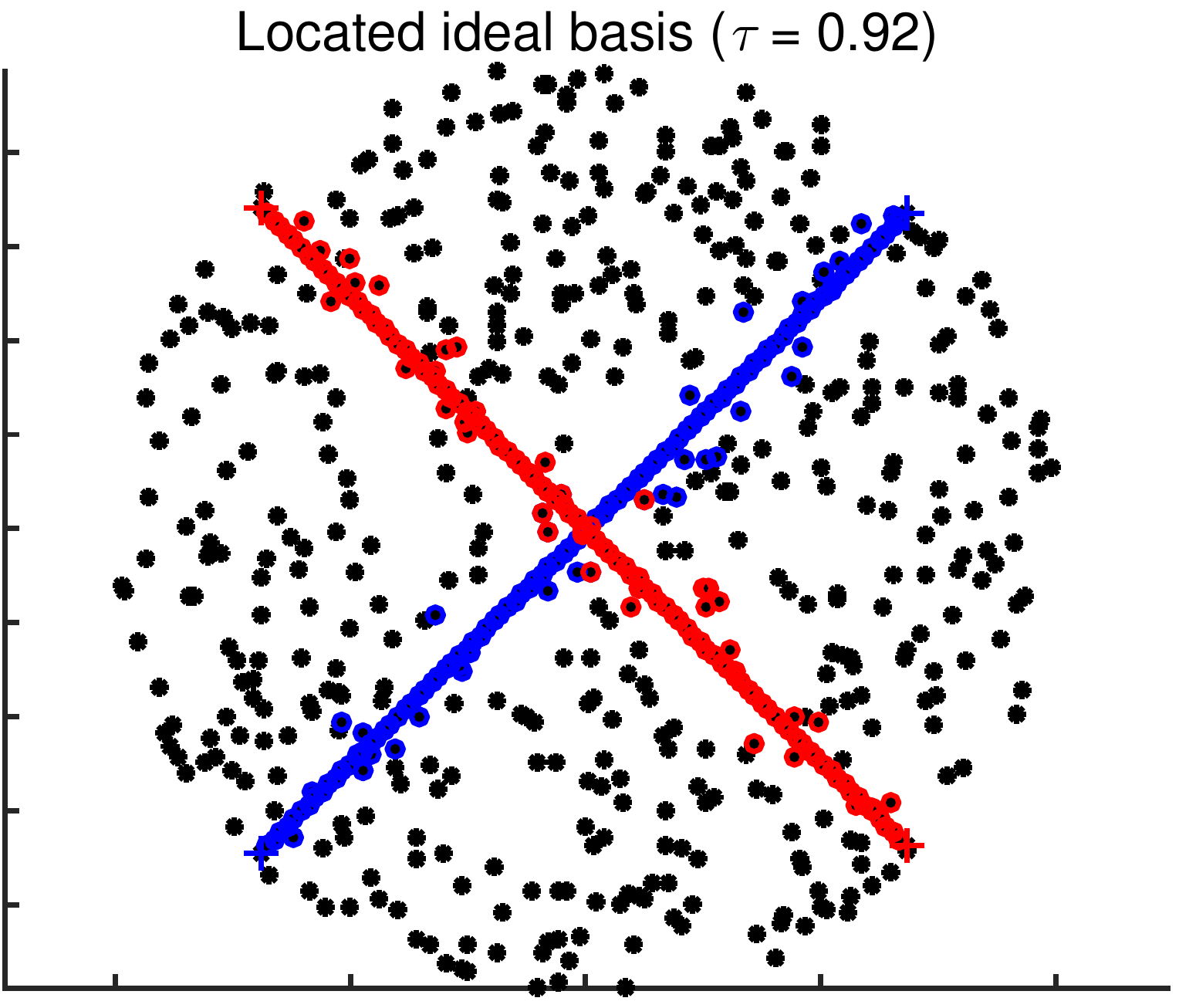}
%\quad
%\subfloat[Ball]{
%\includegraphics[width=.2\textwidth]{}
%}
%\subfloat[Gaussian]{
%\includegraphics[width=.2\textwidth]{}
%}
%\subfloat[Swiss Roll]{
%\includegraphics[width=.2\textwidth]{}
%}
%\subfloat[Scale-free]{
%\includegraphics[width=.2\textwidth]{}
%}
\caption{2D points set $X$ (left) includes two subsets of colinear, dense, evenly-spaced, points that make obvious long-enough axes. Middle: two axes red and blue found by 
our algorithm are not the best basis, but reasonable; $\tau$ is the mean Kendall's $\tau$ between true rankings and basis-estimated rankings. Right: ideal axis red, blue also include few other points (criteria in Fig 2) since the gap on the colinear points is not small enough; thus $\tau<1$.
%\\
%Examples (a) through (d) show recovery of three axes on generated data.
}
\label{fig-missed-perfect}
\vspace{-1.5em}
\end{figure*}

The present work takes on the challenges of state-of-the-art methods, namely
scalability and reliability, by
considering the underlying geometry of the problem, and avoiding optimization
at some expense of performance.
Our main contributions are:
\begin{itemize}[nosep]
	\item an algorithm to quickly produce embeddings of moderate to high quality
		using computational geometry rather than optimization;
	\item the first published algorithm to adaptively select
		a number of comparisons within a constant factor of the lower bound which
		achieves nearly-perfect embeddings for realistic datasets
		($d \ll n$ and ``not too sparse''); and
	\item the first published latent dimensionality estimator for ordinal
		datasets.
\end{itemize}
\vspace {-1ex}
We also introduce subroutines which provide convex hull estimation,
perpendicular line discovery, and a test of affine independence.

\textbf{Our approach.}
We build our algorithm on the following intuition.
Suppose that for some finite dataset $X \subset \mathbb{R}^d$ we also had
access to $d$ orthogonal ``axes'', not as actual lines or segments, but in the
form of $d$ subsets of points $A_1,\dots,A_d$. The points on each axis are
colinear, dense (max gap $\epsilon$), and evenly-spaced. Additionally each
$A_i$ is ``long enough'' so that every point in $X$ has a ``projection'' on it.

Each axis $A_i$ contains two special
``endpoints''; the $A_i$ points have the \emph{order-consistent property} that
sorting by increasing distance from one endpoint
is equivalent to sorting by decreasing distance from the other endpoint.
For each point $x$ in $X$, we can approximate its geometric projection to an
axis $A_i$ by locating the closest point on $A_i$ to $x$ with a binary search
(using triplet comparisons), for a total cost of
$\Theta(n d \log m)$ comparisons, where $m = \max_i |A_i|$.
The coordinate is given by the rank of that closest axis point from an
arbitrary endpoint.
Note that if the axis points are evenly distributed, then
each ordinal coordinate point is within $\epsilon$ of the (rescaled) true
projection on that axis; if $\epsilon$ is small enough,
this embedding recovers the original metric space with arbitrary precision.

Throughout the paper, we refer to such a set of points 
($A_i$) as an ``axis,'' and to the collection of axes as a ``basis.''
In general, a real dataset will not contain evenly-spaced points along the axes
of a perfect basis, but we achieve good performance with
axis points which are only approximately collinear and axes which are
only approximately orthogonal.
We will show
how to find such axes as subsets of $X$.
The axes we find, as sets, have the order-consistent property between
endpoints, so ranking is non-ambiguous.
Other desirable properties for high quality ordinal coordinates are:
\begin{enumerate}[label={(\arabic*)},align=left,nosep]
\item enough axes to account for differences between points,
\item axes that are closer to being orthogonal,
\item axes that extend to the boundaries of a bounding box for the set,
\item points along all axes that are evenly-spaced, and
\item gaps between points on each axis that are smaller than gaps between
	off-axis points.
\end{enumerate}
Note that we do not require the axes to intersect (i.e. for $d>2$ we do not
need to have an origin point).

We address
(1) through estimating affine independence,
(2) through the symmetry of sphere intersections,
and (3), roughly, through finding points close to the boundary of the convex
hull of $X$ (denoted $conv(X)$),
but do not directly address (4) or (5) in the present work.
For this reason, our method works best under relatively smooth density
conditions and does not perform as well with high-dimensional datasets or
datasets with rapidly-varying densities.
See Figure~\ref{fig-missed-perfect} for the impact of missing the ``ideal''
basis.

% The quality of the basis/embedding we find depends critically on the density
%of points in $X$.
If $\epsilon$ is the radius
of the largest open ball in $conv(X)$ which contains no members of $X$, we say
$X$ is $\epsilon$-dense.
It is easy to show that as $\epsilon \rightarrow 0$
% (implying that $m \rightarrow \infty$),
the basis our algorithm finds converges to a perfect basis and thus we fully
recover the underlying metric.
When our basis is imperfect, we can still use it to efficiently
perform tasks such as finding the $k$-nearest neighbors of each member of $X$,
or producing input to the user's favorite embedding routine (such as SOE).

\section{Finding an Approximate Basis}
\label{sec-basis}

Algorithm~\ref{fig-alg-basis} is our basis-finding algorithm.
Each of the $\hat{d}$ axes is formed by choosing pairs of endpoints near the
boundary of $X$ and finding points from $X$ close to their (linear) convex
hull.
The first axis is formed using two points opposite each other on the boundary
of $conv(X)$, using that the most-distant point from any member of $X$ is on
this boundary.
After that, we find new axes which are likely to be orthogonal to the previous
set of axes by identifying points which are far from the convex hull of the
previous axis endpoints (see Section~\ref{sec-basis-axes}).
In our algorithm, $r_{p_i}[\cdot]$ returned by
\texttt{SortForHead}($p_i,oracle$)
is the array of the ranks of all objects $x \in X$ sorted by distance from
endpoint $p_i$.

\begin{algorithm}[h]
\caption{ChooseBasis($n,oracle$)}
\label{fig-alg-basis}
\SetKw{Break}{break}
\SetKwInOut{Input}{Input}
\SetKwInOut{Output}{Output}
\SetKwFunction{SortForHead}{SortForHead}

\Input{$n$ is the number of objects in the collection
\\ \hangindent=2cm
$oracle(a,b,c)$ decides whether $b$ or $c$ is closer to $a$.}
\Output{A basis $(A_1,\dots,A_{\hat{d}})$, where each axis $A_i$
	lists points near a line crossing the dataset.}
$z \gets $ \emph{a randomly selected point} \;
$p \gets $ \emph{furthest point from $z$} ; \tcp{first axis endpoint}
$r_{p}[\cdot] \gets \SortForHead(p, oracle)$ \;
$P \gets \{ p \}$ \;
\For{$i \gets 1,3,5,\dots,n$}{
	\tcp{Complete the $(i+1)/2$ axis}
	$p_i \gets p$ \;
	$L \gets \{ x : r_{p_j}[x] \le r_{p_j}[p_i], \forall j<i \}$ ; 
		\hspace{3ex} \tcp{Lens}
	$p_{i+1} \gets \argmax_{x \in L} r_{p_i}[x]$ ; \tcp{Apex oppos. $p_i$} 
	$r_{p_{i+1}}[\cdot] \gets \SortForHead(p_{i+1}, oracle)$ \;
	$\hat{d} \gets (i+1)/2$ \;
	% $A_{\hat{d}} \gets \widehat{conv}(\{r_{p_i}[.],r_{p_{i+1}}[.]\})$ \;
	$A_{\hat{d}} \gets \widehat{conv}(\{p_i, p_{i+1}\})$ \;
	\tcp{Verify candidate for next axis}
	$p \gets $ \emph{point ``above'' max \# of points in $\widehat{conv}(P)$}\;
	\If{\emph{no point is ``above'' any other}}{
		\Break \;
	}
	$r_{p}[\cdot] \gets \SortForHead(p, oracle)$ \;
	$P \gets P \cup \{p\}$ \;
	\If{$\hat{d} = 1$}{
		$P \gets P \cup \{p_{i+1}\}$ \;
	}
	\tcp{Test affine independence}
	\If{$\widehat{conv}(P) = \cup_{z \in P} \widehat{conv}(P \setminus \{z\})$}{
		\Break \;
	}
}
\Return $(A_1,\dots,A_{\hat{d}})$ \;
\end{algorithm}

We fully describe our algorithm in the following subsections,
and analyze its cost here.
Line 2 uses $n-2$ comparisons.
Each call to \texttt{SortForHead} sorts all objects in $\Theta(n \log n)$ 
comparisons.
We make two calls per axis, so we use $\Theta(n \hat{d} \log n)$ comparisons
(within the theoretical bound as long as $\hat{d} = O(d)$).

The $\widehat{conv}$ function (Eq.~\ref{eq-conv-estimate})
estimates the convex hull of a set of points based on their rankings of
the other points.
It uses $O(n^2 \hat{d})$ operations (not comparisons), because it has to
iterate over all $2\hat{d}$ rankings for each point.
The calculation on Line 12 to find the next candidate takes $O(n\hat{d})$
operations to scan the axis endpoints' rankings of each point in $X$.
Thus, \texttt{ChooseBasis} uses $\Theta(n^2 \hat{d}^2)$ total operations.

\subsection{Choosing Axes}
\label{sec-basis-axes}

We identify axes by choosing axis endpoints which are as far as
possible from the convex hull of the previous axis endpoints. 
Let $P = \{p_1,\dots,p_{2\hat{d}}\}$ be the set of endpoints for the $\hat{d}$
axes found so far.
A straightforward approach, different than ours, is a
\emph{farthest-rank-first traversal of $X$} (FRFT) adapted from
\citet{Gonzalez:1985kk}:
choose endpoints as far as possible (in rank) from the previous axis endpoints,
i.e. by 
$\argmax_{x \in X} \{\min_{p \in P} r_p[x]\}$.

This forms a rank-based approximation of an $\epsilon$-net, which can be
used to approximate the geometric distribution of a set of points.
The axis endpoints thus found are well-separated from each other,
and tend to lie closer to regions of higher density than the points of an
$\epsilon$-net.
However, our testing occasionally found that some of the resulting axes
are nearly parallel.

We present a more reliable approach. We first discuss the notions of a
\emph{lens} and its \emph{apex}, and \emph{above}-ness.
Suppose we have selected a single axis with endpoints $p_1$ and $p_2$, and that
for all $x \in X$ we find that
$d(p_1,x) \le d(p_1,p_2)$ and $d(p_2,x) \le d(p_2,p_1)$.
Then the set $X$ lies within the \emph{lens} between $p_1$ and
$p_2$ --- the intersection of the closed balls centered on $p_1$ and $p_2$ with
radii both equal to $d(p_1,p_2)$.
Future ideal axis endpoints will lie close to the
\emph{apex} of this lens: the set of points $\mathcal{A}(\{p_1,p_2\})$, where
\begin{align} \label{eq-lens-apex}
\mathcal{A}(P) \equiv \{ x \in \mathbb{R}^d : \forall p \in P,
	d(x,p) = \max_{q \in P} d(p,q) \}
\end{align}
is the intersection of the (hollow) spheres centered on vertexes $P$ that
surround $P$. In general on $d$ dimensions, $d$ points will create an apex of 2
points, $d-1$ points an apex of a circle, $d-2$ points a sphere, etc.
% , roughly, the set of points in $X$ which are close enough to the
% points in the convex hull $conv(P)$ that they will be interleaved with them in
% the rankings for the vertices in $P$.
A ``lens'' can similarly be formed by choosing a point $a \in X$ as
an apex and using the distances $d(p,a)$ for each $p \in P$ as the ball radii
(``lens'' is informal here, as it is the intersection of $|P|$
and not strictly 2 balls).

When points near the lens apex exist in $X$, these points are ideal choices
because they are as far as possible from, and form orthogonal lines to, the
line between $p_1$ and $p_2$.
FRFT will naturally choose these points, when possible.
However, these points only exist in $X$ when the range
along all dimensions is almost equal.

We say that point $p$ is \emph{above} point $q$ with respect to some set $P$
if for all $a \in P$ we have $d(a,p) > d(a,q)$.
See Figure~\ref{fig-above} for an example.
This means that $q$ lies in the ``lens'' of $P$ with $p$ as its apex.
By Theorem~\ref{prop-ball-covering} below, $p$ can thus not
be in $conv(P)$.
If the set $P$ is fixed, above-ness is transitive:
if $p$ is above $q$ and $q$ is above $z$, that implies $p$ is above $z$.

\begin{figure}[ht]
\centering
\includegraphics[width=.35\textwidth]{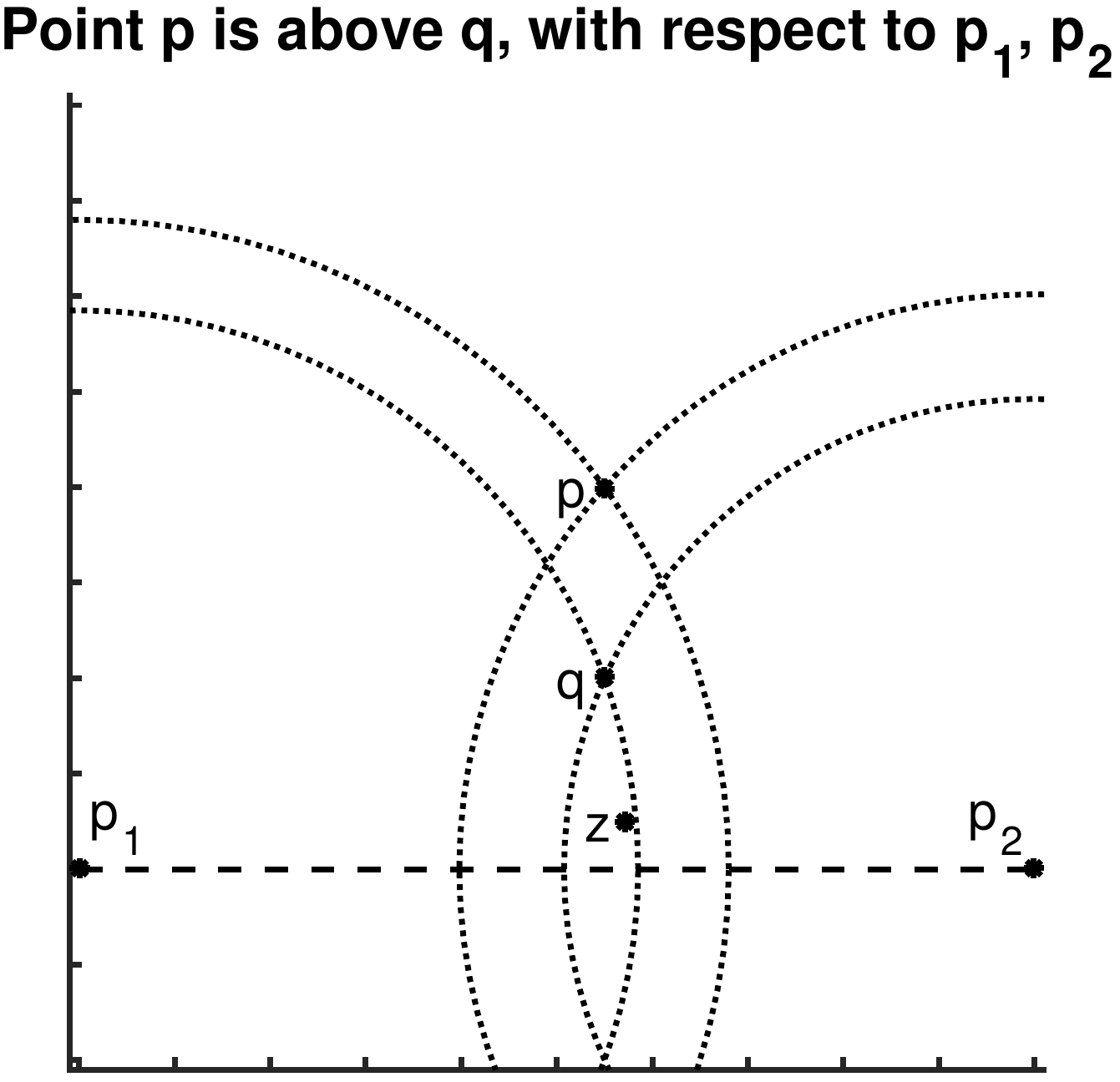}
\captionsetup{justification=raggedright}
\caption{Point $p$ is ``above'' $q$ because $q$ is found in the
intersection of balls centered on $p_1$ and $p_2$ and extending to $p$.
This implies that $p$ is farther from the line than $q$ and thus not in
$conv(\{p_1,p_2\})$.
}
\label{fig-above}
\vspace{-1em}
\end{figure}

\textbf{Axis endpoints selection}. The first new axis endpoint $p_{i}=p$
(Algorithm~\ref{fig-alg-basis} line 12) is chosen to be
above a maximal number of points in the convex hull of prior endpoints.
If there is a point close to the apex of the ball
intersection, it will be above nearly the entire set and will be chosen as the
maximum (and our algorithm will make the same choice as FRFT).
If there is no such point then we will choose a point which is above as many
points as possible, hoping for a point above a dense region of $X$ which
also lies close to the apex of the ball intersection.
The second point $p_{i+1}$ (line 8) is chosen to approximate the apex opposite
the first endpoint $p_{i}$ in the lens with $p_i$ as an apex.

\textbf{Dimensionality estimate $\hat{d}$}. We stop adding axes as soon as our
next axis endpoint $p$ does not appear
to be affinely-independent of the previous axis endpoints.
Our test for affine independence relies on Carath\'{e}odory's Theorem, which
states that any point in the convex hull of a set of points in $\mathbb{R}^d$
can be expressed as a convex combination of just $d+1$ or fewer of them.
We define a set $P$ containing both endpoints for the first axis, plus
the first endpoint of each additional axis and our new candidate $p$.
We have $|P| = \hat{d}+2$.
Suppose we have already found $d$ dimensions, and the new axis endpoint is
simply not in the convex hull of the previous endpoints.
Then any point in $conv(P)$ is also in the convex hull of some
set of all but one of the points in $P$.
On the other hand, if $d > \hat{d}$ then these extra hulls are the faces of a
higher-dimensional manifold, and we expect some of the points in that manifold
to be far enough from the convex hulls to not be included in our
$\widehat{conv}$ estimates.

We prove that our dimensionality estimate $\hat{d}$ is always at most the true
dimensionality $d$ and that it converges to $d$ as $\epsilon \rightarrow 0$ in
Theorem~\ref{prop-dim-converges} in Appendix~\ref{app-basis}.
See Table~\ref{tbl-dim-estimates} for dimensionality estimates on various
datasets.
Since the number of points in each dataset is the same, as the dimensionality
increases the density constant $\epsilon$ grows.
This causes $\widehat{conv}$ to be less precise and leads us to
underestimate the dimensionality.
%The effect is especially marked for the Gaussian case, which already has less
%density away from the center.

\begin{table}
\caption{Dimensionality Estimates
\\
(1,000 points, avg. of 100 runs)
}
\label{tbl-dim-estimates}
\begin{center}
\begin{tabular}{lrrrrrrr}
{\bf True $d$:} & 1 & 2 & 3 & 5 & 8 & 10 & 20 \\
\hline
Ball & 1 & 2 & 2.11 & 3.66 & 4.22 & 4.54 & 5.53 \\
Cube & 1 & 2 & 2.37 & 3.74 & 4.44 & 4.58 & 4.78 \\
Gaussian & 1 & 2 & 2.98 & 3.91 & 4.44 & 4.54 & 4.52 \\
Sphere & 1 & 1 & 2 & 3.09 & 3.85 & 4.08 & 4.93 \\
\end{tabular}
\end{center}
\vspace{-1.5em}
\end{table}

\subsection{Convex Hull Estimation}
\label{sec-conv-hull}

Our convex hull estimation $\widehat{conv}$ works because any union of balls
which all coincide in some point must contain the convex hull of the ball
centers.
We prove this as Theorem~\ref{prop-ball-covering} in
Appendex~\ref{app-convex-rankings}.

Suppose we want to identify points from $X$ which lie in the convex hull of a
set $P = \{p_1,\dots,p_k\} \subset X$.
The intersection of $X$ and such a union of balls can easily be formed by
choosing some point $q$ and taking the set
\begin{align}\label{eq-ball-union}
C_q(P) \equiv \{x\in X : \exists p \in P, r_p[x] \le r_p[q] \}.
\end{align}
In order to reduce false positives, we take the intersection of $C_q$ across
all possible points $q$ as our estimate.
\begin{align}\label{eq-conv-estimate}
\widehat{conv}(P) &= \cap_{q \in X} C_q(P)\\
&= \{ x \in X \hspace{-1ex}: \forall q \hspace{-0.4ex} \in \hspace{-0.4ex} X,
\exists p \hspace{-0.4ex} \in \hspace{-0.4ex} P, r_p[x] \le r_p[q] \}
\end{align}
As the intersection of sets containing $conv(P) \cap X$,
we know $\widehat{conv}(P)$ contains $conv(P) \cap X$.
Theorem~\ref{prop-chull-errors}, proved in Appendex~\ref{app-convex-hull}, says
that
any false positives in our estimate are close to the boundary of $conv(P)$.
See Figure~\ref{fig-frft-axis} for an example of an axis we might select.

\begin{theorem}\label{prop-chull-errors}

Let $\widehat{conv}(A)$ be the estimate of $conv(A)$ for some
$A \subseteq X \subset \mathbb{R}^d$.
If the largest empty ball in $conv(\widehat{conv}(A))$ has radius $\epsilon$,
and the maximum distance between any two points in $A$ is $m$,
then for any $c \in \widehat{conv}(A)$ the distance to the closest point
$c' \in conv(A)$ is less than $\sqrt{ \epsilon(2m + \epsilon) }$.
Further, there is no point $x \in X$ such that $r_a[x] < r_a[c]$ for all
$a \in A$.

\end{theorem}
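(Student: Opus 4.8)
The plan is to prove the two claims in reverse order, since the distance bound will be derived from the final (``Further'') assertion. That assertion is essentially immediate from the definition of $\widehat{conv}$: membership $c\in\widehat{conv}(A)$ says exactly that for every $q\in X$ there is some $a\in A$ with $r_a[c]\le r_a[q]$, so if some $x\in X$ had $r_a[x]<r_a[c]$ for all $a\in A$, then taking $q:=x$ would force an $a$ with $r_a[c]\le r_a[x]<r_a[c]$, a contradiction. I would then record the equivalent metric statement (using that comparisons agree with the Euclidean distances, as assumed): the open region $U_c:=\bigcap_{a\in A}B^{\circ}\!\big(a,\lVert a-c\rVert\big)$, i.e.\ the intersection of the open balls centred at the $a$'s and passing through $c$, contains no point of $X$.

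For the bound, let $c'$ be the metric projection of $c$ onto the closed convex set $conv(A)$ and set $\delta:=\lVert c-c'\rVert$; assume $\delta>0$. I would assemble three elementary facts: (i) the variational inequality for the projection, $\langle a-c',\,c-c'\rangle\le 0$ for all $a\in conv(A)$, which on expanding $\lVert a-c\rVert^2$ gives $\lVert a-c\rVert^2\ge\lVert a-c'\rVert^2+\delta^2$; (ii) $\lVert a-c'\rVert\le m$ for every $a\in A$, since $a'\mapsto\lVert a-a'\rVert$ is convex and so attains its maximum over $conv(A)$ at a vertex; and (iii) $A\subseteq\widehat{conv}(A)$ (each $a\in A$ is a witness for its own membership), hence $c'\in conv(A)\subseteq conv(\widehat{conv}(A))$. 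The density hypothesis then supplies a point $x^{*}\in\widehat{conv}(A)\subseteq X$ with $\lVert x^{*}-c'\rVert\le\epsilon$, since $c'$ lies in the polytope $conv(\widehat{conv}(A))$ whose largest $X$-free ball has radius $\epsilon$. Supposing for contradiction that $\delta\ge\sqrt{\epsilon(2m+\epsilon)}$, i.e.\ $\delta^2\ge\epsilon^2+2m\epsilon$, a short computation from the triangle inequality and (i)--(ii) gives, for every $a\in A$, $\lVert x^{*}-a\rVert^2\le(\epsilon+\lVert c'-a\rVert)^2\le\epsilon^2+2m\epsilon+\lVert c'-a\rVert^2\le\delta^2+\lVert c'-a\rVert^2\le\lVert a-c\rVert^2$, with each inequality strict once $\delta>\sqrt{\epsilon(2m+\epsilon)}$. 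Thus $r_a[x^{*}]<r_a[c]$ for all $a\in A$, i.e.\ $x^{*}\in U_c\cap X$, contradicting the first part; so $\delta<\sqrt{\epsilon(2m+\epsilon)}$.

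The step I expect to require the most care is the appeal to the density hypothesis: turning ``the largest $X$-free ball inside $conv(\widehat{conv}(A))$ has radius $\epsilon$'' into ``$c'$ is within $\epsilon$ of $X$.'' This is exactly why the statement is framed through $conv(\widehat{conv}(A))$, a polytope all of whose vertices are genuine members of $X$ and which already contains $conv(A)\ni c'$: when $c'$ lies in the relative interior one can inflate an $X$-free ball of radius exceeding $\epsilon$ around a small interior perturbation of $c'$, contradicting maximality, but a separate elementary argument is needed when $c'$ sits on a lower-dimensional face or $\widehat{conv}(A)$ is not full-dimensional. The remaining loose ends are routine: ties in the rankings (handled by the usual genericity convention), and the knife-edge $\delta=\sqrt{\epsilon(2m+\epsilon)}$, which is ruled out once the extremal empty ball is read as open, giving $\lVert x^{*}-c'\rVert<\epsilon$ and hence strictness throughout the displayed chain.
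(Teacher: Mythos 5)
Your proof is correct and follows essentially the same approach as the paper's: both project the false positive $c$ onto $conv(A)$, both invoke the obtuse-angle observation at the projection (your variational inequality $\langle a-c',c-c'\rangle\le 0$ is the paper's $\angle vqp\ge 90\degree$), both bound $\lVert a-c'\rVert\le m$ by the diameter, and both use the density hypothesis to force an $X$-point within $\epsilon$ of the projection, which then must beat $c$ on every axis-vertex ranking. The only real difference is cosmetic — the paper derives the bound $h<\sqrt{\epsilon(2m+\epsilon)}$ directly from the non-containment of an $\epsilon$-ball in $E_p=\cap_{a}B(a,\lVert a-c\rVert)$, whereas you run the same inequality chain in reverse as a contradiction against the ``Further'' claim — and you correctly flag that the appeal to density is delicate when $c'$ sits on the boundary of $conv(\widehat{conv}(A))$, a subtlety the paper's own proof shares but does not address.
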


\begin{figure}[t]
\centering
\includegraphics[width=.35\textwidth]{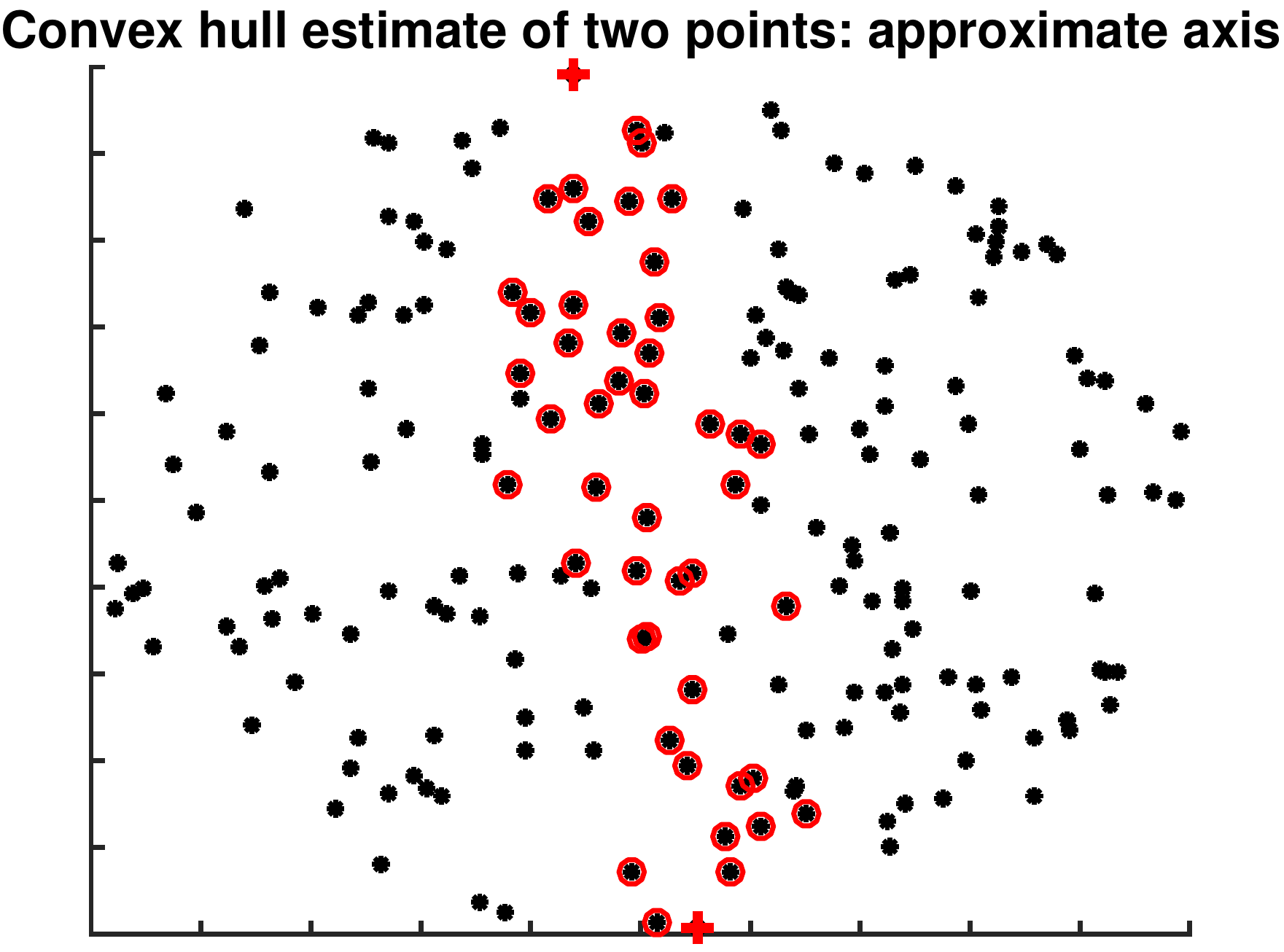}
\captionsetup{justification=raggedright}
\caption{Red points are within $\widehat{conv}$ of the two (blue) endpoints.
Although some of these points are very close together, none is ``above''
any other w.r.t. the endpoints. Lenses are very thin near the line.
}
\label{fig-frft-axis}
\vspace{-1.5em}
\end{figure}

It is easy to show that the points in any axis are order-consistent:
increasing distance order from one axis endpoint is
decreasing distance order from the other endpoint (matching the intuition for
points along a line).
However, the points may be somewhat distant from the line in an arbitrary
direction.
% They inhabit a cylinder of bounded radius running the length of the line.
%, it is not in general possible to infer their partial rankings of
%each other simply from their order along the axis.
%That is, if points $a,b,c$ lie along an axis with endpoint $p$, so that
%$d(p,a) < d(p,b) < d(p,c)$,
%then it is not necessarily the case that $d(a,b) < d(a,c)$.

\subsection{Embedding Each Point}

Given an approximate basis $A_1,\dots,A_{\hat{d}}$, the next step is to embed
the points within the basis.
We accomplish this without any additional comparisons.

Ideally, along each axis $A_i$ the points would be evenly-spaced and lie along
the line between the axis endpoints.
We could then embed any $x \in X$ by simply finding
the index of the closest point via binary search, since
the members of $A_i$ would be sorted as a bitonic array: the distance to $x$
would descend to a minimum and then ascend.
The total comparisons cost would be $O(n \hat{d} \log n)$
(within the theoretical bound).

%Binary search is better for ideal axes as theoretically it depends on
%the density but not on the axis points' spread distribution (see
%Theorem\ref{perfect-embed-epsilon}), but 
In practice, we never have such perfect axes.
When the points of $X$ are in general position, no
member of $X$ will be found in the convex hull of any subset of $d$ or
fewer points and no member of $A_i$ except the endpoints will lie
on the line.
A binary search will not find the closest point in $A_i$ to
$x$ because the points will be not be exactly sorted by distance to $x$.
Further, the closest point in $A_i$ will often be closest simply because it is
not found on the endpoints-line, not because it is near the projection $x'$ of
$x$ onto the line (Figure~\ref{fig-locate-point}).

\begin{figure}[h]
\centering
\includegraphics[width=.35\textwidth]{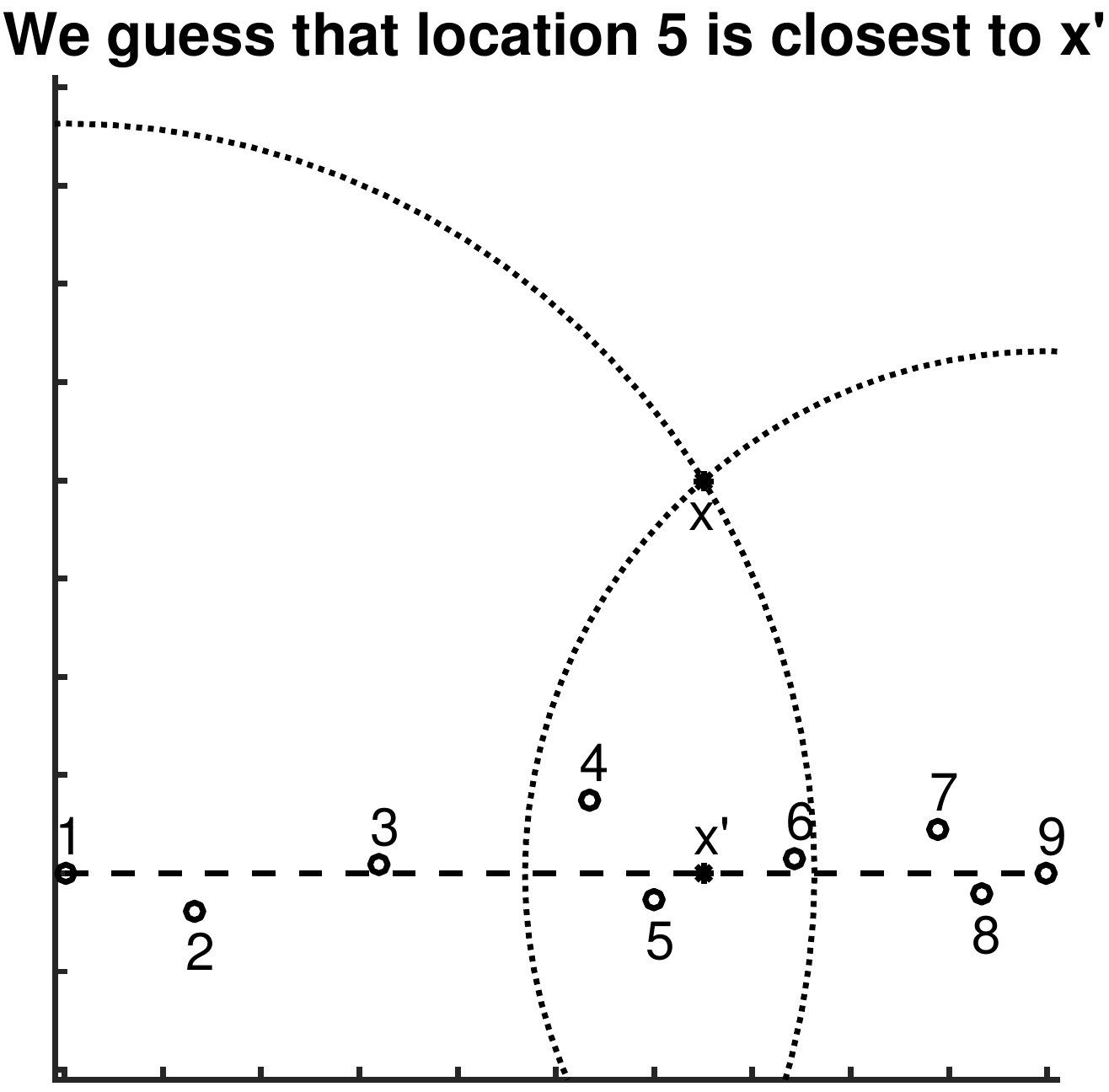}
\captionsetup{justification=raggedright}
\caption{The circled points are members of axis $A_i$.
We choose the median point within the lens beneath $x$ (containing 4, 5, and
6) as our guess at the closest point to its projection, $x'$.
}
\label{fig-locate-point}
\vspace{-1.5em}
\end{figure}

We really want to find the point in $A_i$ which is closest to $x'$,
not closest to $x$.
The projection $x'$ will be in the center of the lens formed
from the axis endpoints with $x$ at its apex.
The lens will always contain some member of $A_i$ (otherwise $x \in A_i$).
We select as the ordinal coordinate of $x$ along
$A_i$ \textbf{the median index} for those axis points inside this lens.

While this may not be the point in $A_i$ closest to $x'$, especially if the
density varies greatly along $A_i$, it costs no
additional comparisons to select this point.
We have found that the empirical performance on our datasets is comparable
to finding the point in the lens closest to $x$ through a linear search.

While even in dense spaces our algorithm might not find orthogonal axes, if the
discovered axes are indeed orthogonal we can guarantee that our embedding
recovers the original metric with a precision depending on the density of $X$
and the true dimensionality.

\begin{theorem}\label{perfect-embed-epsilon}
If any ball of radius $\epsilon$ in $conv(X)$ contains at least 1 and at most
$k$ points, and assuming $\hat{d}=d$ orthogonal axes are found which extend to
the faces of a bounding box for $X$,
using linear search in the lens for points' coordinates, 
then there is a scaling constant $s \in \mathbb{R}$
such that for any two points $x,y \in X$,
\begin{itemize}[nosep,leftmargin=0.3cm]
\item the coordinate $x_i$ on any axis $A_i$ is bounded by its projection
$x'_i$ by
$(s/k) x_i - \epsilon \le x'_i \le s x_i + \epsilon$,
and
\item the scaled distance estimate 
	$\hat{d}:=s \cdot \widehat{dist}(x,y)$ is within
	$2k\epsilon\sqrt{d}$ of the true distance, i.e.,\\
		$dist(x,y) - 2\epsilon\sqrt{d}
		\leq \hat{d}
		\leq k( dist(x,y) + 2\epsilon\sqrt{d} )$.
\end{itemize}
\end{theorem}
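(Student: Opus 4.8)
The theorem has two parts: a per-coordinate bound relating the ordinal coordinate $x_i$ (the median index of axis points inside the lens) to the true projection $x'_i$, and a distance bound that follows by aggregating over the $d$ orthogonal axes. The plan is to first fix a single axis $A_i$ with endpoints $p,q$ and understand the geometry of "the lens beneath $x$": it is the intersection of the ball centered at $p$ through $x$ and the ball centered at $q$ through $x$, and $x'$ (the orthogonal projection of $x$ onto the line $pq$) sits at its center along the $pq$-direction. Since $A_i$ extends to the faces of the bounding box, the axis spans the full coordinate range in direction $i$, so there are axis points on both sides of $x'$ and the lens is nonempty in $A_i$ (this is also asserted in the text: otherwise $x\in A_i$). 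The first job is to bound how far an axis point in this lens can be from $x'$ along the axis direction.

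**The per-coordinate bound.** The plan is: the lens beneath $x$ intersected with the line $pq$ is a segment centered at $x'$; any point of $X$ genuinely on that line-segment within the lens is within some small distance of $x'$ controlled by the lens "width," which in turn is controlled by $\epsilon$ and $m$ exactly as in Theorem~\ref{prop-chull-errors} (an empty-ball-of-radius-$\epsilon$ argument gives a $\sqrt{\epsilon(2m+\epsilon)}$-type bound, which I will absorb into the $\epsilon$ in the statement by rescaling/relabeling — though I should be careful here, see the obstacle below). Then, because every ball of radius $\epsilon$ in $conv(X)$ contains between $1$ and $k$ points, the axis points inside the lens form a set whose indices span a contiguous range, and the ratio between "number of axis points between two locations" and "geometric length between them" is pinched between constants proportional to $1$ and $k$ — this is where the factor $s$ (the scaling constant turning index-differences into lengths) and the factor $k$ (density can be up to $k$ times the minimum) come from. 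Taking the \emph{median} index inside the lens then lands within the lens's index-range, and translating back through $s$ gives the asymmetric bound $(s/k)x_i - \epsilon \le x'_i \le s x_i + \epsilon$: the lower side loses a factor $k$ because a sparse stretch of the axis can make $s \cdot x_i$ overshoot, the upper side does not.

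**Aggregating to the distance bound.** Once the per-axis bound holds, the plan is standard: write $\widehat{dist}(x,y)^2 = \sum_{i=1}^d (x_i - y_i)^2$ using that the discovered axes are orthogonal (so the ordinal coordinates behave like genuine Cartesian coordinates up to the per-axis scaling), and $dist(x,y)^2 = \sum_i (x'_i - y'_i)^2$. Apply the per-coordinate inequality to each term: each $|s x_i - s y_i|$ differs from $|x'_i - y'_i|$ by at most $2\epsilon$ on the upper side and is at least $|x'_i-y'_i|/k - 2\epsilon$-ish on the lower side (combining the two one-sided bounds for $x$ and $y$). Summing the squared discrepancies and using $\sqrt{\sum (a_i+b_i)^2} \le \sqrt{\sum a_i^2} + \sqrt{\sum b_i^2}$ (triangle inequality in $\mathbb{R}^d$) converts the per-coordinate $2\epsilon$ slack into a global $2\epsilon\sqrt{d}$ slack, yielding $dist(x,y) - 2\epsilon\sqrt d \le \hat d \le k(dist(x,y) + 2\epsilon\sqrt d)$ after pulling the worst-case factor $k$ out front. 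Choosing $s$ to be the reciprocal of the average axis spacing (consistently across axes, which is legitimate since a global rescaling is all ordinal embedding can hope to recover) makes everything line up.

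**Main obstacle.** The delicate point is the interaction between "closest point in $A_i$ to $x$" (what a naive search finds), "closest to $x'$" (what we want), and "median index in the lens" (what the algorithm actually outputs) — and making the clean $\epsilon$ in the statement honestly cover the $\sqrt{\epsilon(2m+\epsilon)}$ perpendicular-displacement error from Theorem~\ref{prop-chull-errors}, the even-spacing failure (points off the line and unevenly distributed), and the median-vs-true-projection gap simultaneously. I expect the proof really needs the orthogonality hypothesis to kill cross-terms and the bounding-box hypothesis to guarantee the lens straddles $x'$ on both sides; the hardest estimate is showing the median index inside the lens cannot be pushed far from the index of $x'$ when density varies by up to a factor $k$ — that is precisely what forces the asymmetric $(s/k)$-vs-$s$ bound and the leading $k$ in the distance inequality, and getting the constants to be exactly $2\epsilon\sqrt d$ rather than something messier will require carefully choosing how $s$ and the $\epsilon$-ball condition are normalized.
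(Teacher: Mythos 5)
Your plan for the aggregation step (split the per-coordinate bound into a ``true'' part and a ``slack'' part, apply the triangle inequality / power-mean inequality in $\mathbb{R}^d$ to convert per-axis $2\epsilon$ slack into $2\epsilon\sqrt d$, and pull the worst-case $k$ out front) is exactly what the paper does, up to the cosmetic choice of triangle inequality versus expanding the square and invoking Cauchy--Schwarz. The asymmetric $(s/k)$-vs-$s$ form also comes from the same source you identify: the $\epsilon$-ball condition pins the number of axis points per unit length between $1$ and $k$, with the paper fixing $s = 2\epsilon$ explicitly.

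The gap is in your per-coordinate bound, and you have correctly sensed trouble there but drawn the wrong conclusion about how to fix it. First, the theorem is explicitly stated for the \emph{linear-search} rule (closest axis point to $x$ inside the lens), not the median rule used in practice; your whole median-index machinery is aimed at a statement the theorem does not make, and the median variant is genuinely harder (density can skew the median arbitrarily), so that route would not close. Second, routing through Theorem~\ref{prop-chull-errors} is the wrong lemma: it would leave you with a $\sqrt{\epsilon(2m+\epsilon)}$-type displacement that cannot be ``absorbed into $\epsilon$ by relabeling'' without changing the theorem's constants and introducing a dependence on the axis diameter $m$. The paper does not invoke that theorem at all here. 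Instead it uses a one-step contradiction: if the axis point returned by linear search were more than $\epsilon$ from the projection $x'_i$ along the axis, then a ball of radius $\epsilon$ around $x'_i$ would contain no axis point; by the density hypothesis it must contain some point of $X$; but any such point, lying between axis points and not ``above'' them with respect to the endpoints, would itself have to belong to the axis --- contradiction. This gives the clean $\epsilon$ immediately, with no dependence on $m$ and no counting of indices inside the lens. If you replace your ``lens width via prop-chull-errors plus median pinching'' paragraph with that contradiction and fix $s=2\epsilon$, the rest of your outline goes through as written.
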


This is proven in the appendix.
If we fix $d$ and the diameter of $X$ in each dimension, assuming
orthogonal axes, this theorem implies that when $\epsilon \rightarrow 0$ the
value of $|X|$ approaches infinity and the distances are recovered, down to
scaling, recovering the original metric.

\section{Basis Evaluation Results}
\label{sec-basis-eval}

We show here that our algorithm produces good embeddings on real
datasets, and that an optimization-based embedding of our triples often yields
an even better result.

\paragraph{Datasets.}
We evaluate against several generated and real datasets.
In an attempt to fully exercise our algorithm, we include some datasets which
are not well-suited to it.
We answer all similarity questions based on Euclidean distances between the
original features/positions.
The \texttt{3dgmm}, \texttt{5dgmm}, and \texttt{5dcube} datasets are random
draws of 500 points from Gaussian mixtures and the unit cube.

For \texttt{cities}, we select 500 cities by choosing the most populous city in
each country and then additional cities from most to least populous.
Cities are represented in Euclidean coordinates converted from their
latitudes and longitudes.
We use continents as class labels.
All cities lie on the convex hull, so $\epsilon$ is the diameter of the set.
This leads to a smaller dimensionality estimate and somewhat worse performance.

We use 1,000 records from \texttt{MNIST Digits}, treating raw pixel
values as 784 features.
\texttt{Digits} roughly consists of clusters of points around
class labels, with gaps between them.
We thus have a fairly large $\epsilon$, and wildly varying density
across the dataset.
We also take 1,000 records from \texttt{Spambase} with 57
features.
Performance on \texttt{spambase} is especially strong.
Our basis for this dataset outperforms all the others, although SOE was not
able to find a global optimum from our triples.

Finally, we take 2000 records from \texttt{20newsgroups}, using TF-IDF scores 
for 34,072 terms as features. These records are modified using the
\texttt{scikit-learn} Python package to remove headers, footers, and quotes,
and we only select records having at least 50 nonzero feature values.
This dataset has $d \gg n$ and very sparse feature vectors.
This type of distribution is ill-suited to our algorithm, and performs the
worst.

\begin{table*}[t]
\caption{Embedding Quality
\\
$~^*$ indicates global optimum was not found;
means procedure computationally too expensive
}
\label{tbl-embeddings}
\centering

\begin{tabular}{cc}

\setlength{\tabcolsep}{1pt}
\begin{tabular}[t]{llcccccc}
{\bf Method} & {\bf Dataset} & {$d$} & {$\hat{d}$} & {\# Cmp.}
	& {$\tau$} & {$k$nn} & {rmse}
\\
\hline
Basis & 3dgmm & 3 & 3 & 38K & 0.71 & 0.64 & 0.77 \\
Basis+SOE & 3dgmm & 3 & 3 & 38K & 0.99 & 0.97 & 0.02 \\
Extra+SOE & 3dgmm & 3 & 3 & 61K & {\bf 0.99} & {\bf 0.99} & {\bf 0.01} \\
Rand+SOE & 3dgmm & 3 & 3 & 38K & 0.95 & 0.81 & 0.11 \\
CK & 3dgmm$^*$ & 3 & 3 & 38K & -0.01 & 0.02 & 1.79 \\
\hline
Basis & 5dcube & 5 & 3 & 39K & 0.49 & 0.40 & 0.26 \\
Basis+SOE & 5dcube & 5 & 6 & 39K & 0.88 & 0.73 & 0.05 \\
Extra+SOE & 5dcube & 5 & 6 & 61K & {\bf 0.94} & {\bf 0.92} & {\bf 0.03} \\
Rand+SOE & 5dcube$^*$ & 5 & 6 & 39K & 0.61 & 0.30 & 0.19 \\
CK & 5dcube$^*$ & 5 & 5 & 39K & 0.01 & 0.02 & 0.34 \\
\hline
Basis & 5dgmm & 5 & 3 & 39K & 0.68 & 0.60 & 0.90 \\
Basis+SOE & 5dgmm & 5 & 6 & 39K & 0.94 & 0.66 & 0.14 \\
Extra+SOE & 5dgmm & 5 & 6 & 62K & {\bf 0.98} & {\bf 0.97} & {\bf 0.04} \\
Rand+SOE & 5dgmm$^*$ & 5 & 6 & 39K & 0.01 & 0.02 & 1.77 \\
CK & 5dgmm$^*$ & 5 & 5 & 39K & -0.01 & 0.02 & 1.57 \\
\end{tabular}

&

\setlength{\tabcolsep}{1pt}
\begin{tabular}[t]{llcccccc}
{\bf Method} & {\bf Dataset} & {$d$} & {$\hat{d}$} & {\# Cmp.}
	& {$\tau$} & {$k$nn} & {rmse}
\\
\hline
Basis & 20news & 34K & 3 & 186K & {\bf 0.11} & {\bf 0.06} & 0.53 \\
Basis+SOE & 20news$^*$ & 34K & 6 & 186K & 0.01 & 0.01 & {\bf 0.34} \\
Extra+SOE & 20news$^*$ & 34K & 6 & 310K & -0.01 & 0.01 & 0.34 \\
Rand+SOE & 20news$^*$ & 34K & 3 & 186K & 0.01 & 0.01 & 0.44 \\
CK & 20news & 34K & 16 & --- & --- & --- & --- \\
\hline
Basis & cities & 3 & 2 & 28K & 0.37 & 0.35 & 0.60 \\
Basis+SOE & cities & 3 & 4 & 28K & 0.89 & 0.54 & 0.13 \\
Extra+SOE & cities & 3 & 4 & 50K & {\bf 0.96} & {\bf 0.93} & {\bf 0.05} \\
Rand+SOE & cities$^*$ & 3 & 4 & 28K & 0.01 & 0.02 & 0.75 \\
CK & cities$^*$ & 3 & 3 & 28K & 0.01 & 0.02 & 0.67 \\
\hline
Basis & digits & 784 & 6 & 159K & 0.52 & 0.29 & 3.18 \\
Basis+SOE & digits$^*$ & 784 & 12 & 159K & 0.01 & 0.01 & 2.48 \\
Extra+SOE & digits$^*$ & 784 & 12 & 211K & 0.01 & 0.01 & 2.49 \\
Rand+SOE & digits$^*$ & 784 & 12 & 159K & {\bf 0.73} & {\bf 0.40} & {\bf 2.31}
\\
CK & digits & 784 & 10 & --- & --- & --- & --- \\
\hline
Basis & spam & 57 & 3 & 85K & 0.85 & {\bf 0.78} & 471 \\
Basis+SOE & spam$^*$ & 57 & 6 & 85K & -0.01 & 0.01 & 596 \\
Extra+SOE & spam$^*$ & 57 & 6 & 138K & 0.01 & 0.01 & 596  \\
Rand+SOE & spam & 57 & 3 & 85K & {\bf 0.94} & 0.23 & {\bf 150} \\
CK & spam & 57 & 10 & --- & --- & --- & --- \\
\end{tabular}
\end{tabular}
\vspace{-1.5em}
\end{table*}

\paragraph{Evaluation.}
We evaluate by comparing all rankings or distances between an embedding and
the original dataset.
We report
mean Kendall's $\tau$,
mean $k$NN precision for $k=\lceil \log_2 n \rceil$,
and distance RMSE.

\emph{Distance RMSE} (root mean squared error), 
is based on the fact that in a perfect embedding all pairwise distances would
be scaled by the same constant.
Recall that $d(x,y)$ is the distance between $x$ and $y$ in $X$, and denote
the distance in some embedding $\hat{X}$ by $\hat{d}(x,y)$.
If $\hat{X}$ recovers $X$, then there is some $s \in \mathbb{R}$ such
that for all points $x,y \in X$, $d(x,y) \approx s\hat{d}(x,y)$.
We fit an optimal $s$ and report the RMSE of the residuals,
\begin{align*}\label{eq-distance-rmse}
rmse&(X,\hat{X}) \equiv 
&\min_{s} \left( \frac{1}{n} \sum_{i<j}
(d(x_i,x_j) - s\hat{d}(x_i,x_j))^2 \right)^{1/2}
\end{align*}
Smaller is better and zero is perfect, but the numbers are not comparable
across different datasets.

\paragraph{Experiments.}
Our results are in Table~\ref{tbl-embeddings}.
\texttt{Basis} is the embedding our geometric algorithm produces.
\texttt{Basis+SOE} uses the triples collected by \texttt{Basis} as input to
the Soft Ordinal Embedding (SOE) algorithm \citep{Terada:2014tr}.
\texttt{Extra+SOE} runs use additional comparisons to improve the
embedding as described in Section~\ref{sec-improving}.

SOE does very well when a global optimum is found, but often takes many random
initializations to find one.
We attempt 20 embeddings in $\hat{d}$ dimensions;
if the minimal loss is above $10^{-3}$, we try again in $2\hat{d}$ dimensions
and report the best of 20 embeddings in the higher dimensionality.
Even a small amount of loss from the SOE objective can lead to a
poor embedding, and for several of our datasets it was simply unable to
find a global optimum or even a competitive local optimum (at least, not in 40
attempts).
Note that zero loss is always possible for correct comparisons
in the true dimensionality, but not necessarily in $\hat{d}$ dimensions.
In general, one never knows whether a particular loss threshold can be
achieved, especially given noisy or potentially non-Euclidean comparisons.

\paragraph{Baselines.}
We compare against two baselines.
Notably, \texttt{Basis} is much faster than the baselines, completing in less
than two seconds for each dataset and often much less than one.
In contrast, the SOE runs took more than 24 hours to repeat embeddings with new
random initializations, and the \texttt{CK} runs took several days to generate
triples.

\texttt{Rand+SOE} picks random comparisons in round robin style for each
head until its budget is exhausted, and then embeds them with SOE.
SOE generally struggled to embed these triples, but performance was good
when it worked.
\texttt{CK} runs the CrowdKernel method \citep{Tamuz:2011vma} up to the number
of triples used to build our basis, and evaluates the resulting CrowdKernel
embedding.
We used the authors' CrowdKernel code, but it was not able to handle the
larger datasets.
We report all datasets which completed.

\begin{table}[t]
\caption{Classification Accuracy, 5 folds}
\label{tbl-classification}
\begin{center}
\begin{tabular}{lrrrrrr}
	& \multicolumn{3}{c}{Original}
	& \multicolumn{3}{c}{Embedding}
\\
{\bf Dataset}
& $d$ & Train & Test & $\hat{d}$ & Train & Test \\
\hline
20news & 34K & 0.94 & 0.54 & 3 & 0.21 & 0.08 \\
cities & 3 & 1 & 0.95 & 2 & 0.99 & 0.90 \\
digits & 784 & 1 & 0.84 & 6 & 0.92 & 0.71 \\
spam & 57 & 0.99 & 0.97 & 3 & 0.85 & 0.74 \\
\end{tabular}
\end{center}
\vspace{-1.5em}
\end{table}

%To get an idea of how useful our basis embeddings might be when they are the
%only available features,
As a simple test of downstream utility,
we trained Gradient Boosting classifiers on our basis embeddings and
compared the classification accuracy to classifiers trained on the original
feature space.
The results can be seen in Table~\ref{tbl-classification}.
Performance is good even with $\hat{d}\ll d$, with the exception of the
ill-suited \texttt{20newsgroups} (which is at least better than random).

\section{Improving the Ordinal Embedding}
\label{sec-improving}

Once we have obtained an embedding of reasonably high quality, it is not
difficult to adaptively select new triples to drive the quality upward.
%Before we describe our algorithm for further comparison selection,
First, we pause to consider the information that can already be inferred
from the triples gathered thus far.
Any geometric properties implied by the triples must be true of any embedding
that satisfies them, so this helps us reason about what we
have already ``told'' the optimizer.

We have sorted all points from the endpoints of each axis, and selected
endpoints that are near the boundaries of the set.
This already carries a lot of information to an embedding algorithm.
The estimates $\widehat{conv}$ are fixed for any subset of axis endpoints,
establishing a layer of points near their hulls.
We know the set of points which are only ``above'' points in $\widehat{conv}$,
establishing a second layer,
and we similarly know the contents of every layer up to the set boundary.

We lack information about dimensions whose extents were too small or sparse for
us to discover.
We don't know exact distances, so we
can't immediately identify the $k$-nearest neighbors of each point.
With the results of \citet{Terada:2014tr} and \citet{Hashimoto:2015wj} in mind,
showing good performance embedding based on the $k$NN with $k = \log n$,
we sort the $2k$ nearest points to each point within our embedding,
costing $\Theta(n k \log k)$ additional comparisons.
For our experiments, we use $k = \log_2 n$ for a total cost of
$\Theta(n \log n \log \log n)$ additional triples.
Note that if you wish to simply identify the $k$NN rather than sort them, a
selection algorithm can instead be used for $\Theta(n \log n)$ new
comparisons in total, staying within the lower bound.
%
%\subsection{Evaluation}
%\label{sec-improve-eval}
%
%We repeat the embedding experiments from Section~\ref{sec-basis-eval},
%reporting the best of 20 SOE embeddings and doubling the dimensionality if
%necessary.
The result is in Table~\ref{tbl-embeddings} as the \texttt{Extra+SOE} line.
%They are based on adding additional triples to the \texttt{Basis+SOE} runs.
In all cases, the embedding quality improved.

\section{Related Work}
\label{sec-related}

Ordinal Embedding, a.k.a. non-metric embedding or non-metric
multidimensional scaling, has been studied for over sixty years.
Optimal comparison selection, however, is less studied.
When all answers are known in advance (i.e. from features),
practitioners either use them all, select a random subset,
identify the kNN, or sort a set of ``landmark'' objects.
\citet{Jamieson:2011bt} suggest using embeddings to determine whether a
question can be decided from prior answers.
The only adaptive algorithm we have found which works in practice is
the CrowdKernel algorithm by \citet{Tamuz:2011vma}.
%The comparison cost of this method has not been reported.
Given an intermediate embedding based on prior answers,
it greedily selects new questions for each object to maximize the
expected information gain for the embedding objective.
This method outperforms random selection (apparently by selecting
tails closer than average to the head), but its embeddings compare
poorly to those of Soft Ordinal Embedding.

Ordinal embedding has been heavily studied, particularly by the metric
and kernel learning communities.
Early approaches employed semidefinite programming
\citep{Weinberger06distancemetric, Xing03distancemetric}
and/or required eigenvalue decompositions.
Later approaches focused on minimizing Bregman divergences
\citep{Davis:2007:IML:1273496.1273523, Kulis:2009:LKL:1577069.1577082,
Jain:2012:MKL:2503308.2188402}, which is guaranteed to find a positive
semidefinite (PSD) kernel, or on ignoring semidefiniteness until convergence
and projecting the output matrix to the nearest PSD matrix
\citep{Chechik:2010:LSO:1756006.1756042}.
Ordinal embedding without features has also been studied by
\citet{Agarwal:2007wq,Agarwal:2010hc}, who provide a flexible and
modular algorithm with proven convergence guarantees.
\citet{McFee:2011vt} considers how to learn a similarity function which is as
consistent as possible with multiple feature sets as well as ordinal
constraints.
Local (and Soft) Ordinal Embedding \citep{Terada:2014tr} recovers the metric
with guarantees on accurate density recovery.
\citet{Hashimoto:2015wj} prove metric recovery over certain directed graphs, of
including kNN adjacency graphs.

Our algorithm relies on a sort routine, so when an unreliable oracle is used it
is natural to consider the deep literature on crowdsourcing sort algorithms
\citep{Marcus:2011wf,Niu:2015jc}
and on noise-tolerant sorting
\citep{Ajtai:2009ht,Braverman:2008wz,Hadjicostas:2011dv}.

\section{Conclusion and Future Work}
\label{sec-conclusion}

We have presented a Computational Geometric approach to Ordinal Embedding which
offers new theoretical insights into the problem.
In particular, we have contributed approximate algorithms for
dimensionality estimation, tests of convex hull membership and affine
independence, and perpendicular line discovery.
We have combined these methods to find an approximate basis within which points
can easily be positioned.
When run on a sufficiently dense set of relatively low dimensionality, we can
reliably and efficiently produce a medium-to-high quality embedding.
When an optimizer finds a global optimum for our triples, the user
obtains a high quality embedding.

While we have not ``solved'' the embedding or triple selection problems and do
not suggest replacing optimization approaches entirely, our approach provides
new insights into the geometric information contained in a set of triples and
we believe it will lead to faster and more reliable future approaches.

\clearpage
%\subsubsection*{References}
\bibliography{refs}
\bibliographystyle{plainnat}

\clearpage
\appendix
\appendix
\section{Proofs}

In the following proofs, we use $B(c,r)$ to denote a closed ball in Euclidean
space with center $c$ and radius $r$,
and $conv(x_1,\dots,x_k)$ to denote the convex hull (the set of all convex
combinations) of points $x_1,\dots,x_k$.
We also use $d(x,y)$ to denote the Euclidean distance between points $x$
and $y$.

\subsection{Convex Rankings}
\label{app-convex-rankings}

The proofs in this section show that our approximate convex hull contains all
points from the convex hull, and that any extra points are not too far from the
hull's boundary.
We essentially show that for any point $p$ outside the convex hull of a set $V$
of points, there is some point $q$ inside the convex hull which is closer to
all the members of $V$ than is $q$.

We first prove a useful lemma.
%
% LEMMA: Convex hull contains closer points to vertices
%
\begin{lemma}\label{prop-convex-hull-contains-closer}
Let $V = \{v_1,\dots,v_k\}$ be an arbitrary set of points in Euclidean space,
and $p$ an arbitrary point not found inside $conv(V)$.
There is a point $q \in conv(V)$ such that for all $v \in V, d(v,q) < d(v,p)$.
\end{lemma}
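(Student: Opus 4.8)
The plan is to take $q$ to be the Euclidean projection of $p$ onto the closed convex set $conv(V)$, i.e., the unique point of $conv(V)$ minimizing distance to $p$. Such a point exists and is unique because $conv(V)$ is nonempty, closed, and convex. Since $p \notin conv(V)$ by hypothesis, we have $q \neq p$, so $\|p - q\| > 0$; this strict positivity is exactly what will let us upgrade the bound below from ``$\le$'' to ``$<$''.

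The key geometric fact I would invoke is the obtuse-angle (variational) characterization of the projection onto a convex set: for every $v \in conv(V)$ we have $\langle p - q,\, v - q \rangle \le 0$. This follows by considering the point $q + t(v - q) \in conv(V)$ for $t \in [0,1]$ and noting that $t \mapsto \|p - q - t(v - q)\|^2$ attains its minimum over $[0,1]$ at $t = 0$, so its one-sided derivative at $0$ is nonnegative. I would then expand, for any $v \in V \subseteq conv(V)$,
\[
d(v,p)^2 = \|(v - q) - (p - q)\|^2 = \|v - q\|^2 - 2\langle v - q,\, p - q\rangle + \|p - q\|^2 \ge \|v - q\|^2 + \|p - q\|^2,
\]
where the obtuse-angle inequality was used to drop the cross term. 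Since $\|p - q\|^2 > 0$, this yields $d(v,p)^2 > d(v,q)^2$, hence $d(v,p) > d(v,q)$ for all $v \in V$, which is the claim, with $q \in conv(V)$ as required.

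The argument is short, and the only real subtlety is securing the \emph{strict} inequality uniformly over all $v \in V$, which is precisely why the hypothesis $p \notin conv(V)$ is needed: if $p$ lay in the hull then $q = p$, the cross term and $\|p-q\|^2$ both vanish, and we would only obtain $d(v,p) \ge d(v,q)$ with equality. An alternative route would be to separate $p$ from $conv(V)$ by a hyperplane and perturb $p$ toward the hull, but the projection approach is cleaner and delivers the uniform strict bound in one step. I do not expect any genuine obstacle beyond stating the projection characterization with care.
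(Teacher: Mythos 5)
Your proof is correct and follows essentially the same approach as the paper: both take $q$ to be the Euclidean projection of $p$ onto $conv(V)$ and then argue that the angle at $q$ in triangle $vqp$ is at least $90\degree$ (in your case via the variational inequality $\langle p-q, v-q\rangle \le 0$, in the paper's via the separating hyperplane and the triangle-longest-side argument), concluding $d(v,p) > d(v,q)$ uniformly over $v\in V$. Your algebraic expansion is just the inner-product restatement of the paper's geometric angle argument.
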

\begin{proof}
	Let $C := conv(V)$.
	Because $C$ is convex, there is some unique point $q \in C$ which is
	closer to $p$ than any other point in $C$, so
	for any arbitrary vertex $v \in V$ we have $d(p,v) \ge d(p, q)$.
	Since $C$ is convex and $q$ is the closest point in $C$ to $p$,
	there is a hyperplane passing through $q$ perpendicular to line $pq$
	which separates $C$ from $p$.
	Thus, we must have either $q = v$ or $\angle p q v \ge 90\degree$.
	So edge $vp$ is the longest in $\triangle v p q$ and
	$q$ is closer to $v$ than is $p$.
\end{proof}

Next, we show that any union of balls which all have at least one point in
common will cover the convex hull of the ball centers.
For example, for an arbitrary subset $V \subseteq X \subset \mathbb{R}^d$,
the set of all points from $X$ which are ranked no farther from the members of
$V$ than some common point $p$ will contain all the points in
$conv(V) \cap X$.

%
% THEOREM: Ball covering of convex space
%
We next prove Theorem~\ref{prop-ball-covering}.

\begin{theorem}\label{prop-ball-covering}

Let $\mathcal{B} = \{ B(v_1,r_1), \dots, B(v_k, r_k) \}$ be a set of closed
balls in $\mathbb{R}^d$ with centers $v_1,\dots,v_k$ and radii
$r_1,\dots,r_k$, respectively.
If all the balls in $\mathcal{B}$ have at least one point in common, then
$conv(\{v_1,\dots,v_k\})$ is a subset of their union.

\end{theorem}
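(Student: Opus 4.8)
The plan is to show that every point $q$ in $\mathrm{conv}(\{v_1,\dots,v_k\})$ lies in at least one of the balls $B(v_i,r_i)$, using the common point $p$ that belongs to all of them as a ``witness.'' Let $p$ be a point with $d(p,v_i)\le r_i$ for every $i$. Fix an arbitrary $q\in\mathrm{conv}(\{v_1,\dots,v_k\})$. If I can produce an index $i$ with $d(q,v_i)\le d(p,v_i)$, then $d(q,v_i)\le r_i$ and $q\in B(v_i,r_i)$, which is exactly what we need. So the whole theorem reduces to the purely geometric claim: for any point $q$ in the convex hull of $V=\{v_1,\dots,v_k\}$ and any point $p$, some vertex $v_i$ is at least as close to $q$ as it is to $p$.

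To prove that reduced claim I would argue by contradiction via a separating-hyperplane argument, which is essentially the contrapositive of Lemma~\ref{prop-convex-hull-contains-closer}. Suppose $d(q,v_i) < d(p,v_i)$ for \emph{every} $i$. This says each $v_i$ is strictly closer to $q$ than to $p$, i.e. each $v_i$ lies strictly on the $q$-side of the perpendicular bisector hyperplane $H$ of the segment $pq$. That hyperplane is an affine halfspace boundary, and the open halfspace $\{x : d(x,q) < d(x,p)\}$ is convex; since it contains every $v_i$, it contains their convex hull, hence it contains $q$ itself. But $d(q,q)=0 < d(q,p)$ holds trivially only if $p\ne q$ — and indeed if $p=q$ the original hypothesis $d(q,v_i)<d(p,v_i)$ is immediately false for all $i$, so we may assume $p\ne q$, in which case $q$ being in that open halfspace is fine and gives no contradiction directly. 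The cleaner route is: the function $x \mapsto \|x-p\|^2 - \|x-q\|^2$ is affine in $x$, equal to $\langle 2x - p - q,\, q-p\rangle$; it is strictly positive at each $v_i$ by assumption, hence strictly positive at the convex combination $q$; but evaluating at $x=q$ gives $\|q-p\|^2 > 0$, which is consistent, so again I must be more careful. The actual contradiction comes from the \emph{right} affine functional: consider instead $f(x) = \langle x - q,\, q - p\rangle$. Then $f(v_i) = \langle v_i - q, q-p\rangle$; the assumption $\|v_i - p\|^2 > \|v_i - q\|^2$ expands to $2\langle v_i, q-p\rangle > \|q\|^2 - \|p\|^2$, i.e. $2\langle v_i - q, q - p\rangle > \|q\|^2 - \|p\|^2 - 2\langle q, q-p\rangle = -\|q-p\|^2$, so $f(v_i) > -\tfrac12\|q-p\|^2$ for all $i$, hence $f(q) > -\tfrac12\|q-p\|^2$ by affineness, i.e. $0 > -\tfrac12\|q-p\|^2$, which is true and still no contradiction — so strict inequality at the vertices is simply not enough to get a contradiction at an interior point, which is the subtle trap here.

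Given that subtlety, the honest approach is the one already encoded in Lemma~\ref{prop-convex-hull-contains-closer}, used in its contrapositive form. That lemma says: if $p \notin \mathrm{conv}(V)$, then there is a point $q^\star \in \mathrm{conv}(V)$ with $d(v,q^\star) < d(v,p)$ for all $v$. I do not actually want a statement about \emph{some} $q^\star$; I want, for my \emph{given} $q$, an index $i$ with $d(q,v_i) \le d(p,v_i)$. The right tool is a direct compactness/extremal argument: among the vertices $v_1,\dots,v_k$, pick the one, say $v_j$, maximizing $d(p, v_i)$; I claim $d(q, v_j) \le d(p, v_j)$. This is because $q$ is a convex combination $\sum \lambda_i v_i$, and the closed ball $B(p, d(p,v_j))$ contains every vertex $v_i$ (by maximality of $d(p,v_j)$), hence contains their convex hull $\mathrm{conv}(V) \ni q$ (balls are convex), so $d(q,v_j) = d(q, v_j)$... wait, $q \in B(p, d(p,v_j))$ gives $d(q,p) \le d(p,v_j)$, not $d(q,v_j)\le d(p,v_j)$. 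So that doesn't close it either. The clean correct argument: let $i^\star = \arg\max_i d(q, v_i)$; since $\{x : d(x, v_{i^\star}) \le d(q, v_{i^\star})\} = B(v_{i^\star}, d(q,v_{i^\star}))$ is a ball containing all vertices $v_i$ (each $v_i$ satisfies $d(v_i, v_{i^\star}) \le \max_\ell d(v_\ell, v_{i^\star})$... no). The genuinely correct and simplest path is: the map $x \mapsto \max_i (\|x - v_i\|^2 - r_i^2)$ need not be convex, but $x \mapsto \min_i(\ldots)$ — hmm.

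I will therefore follow the paper's evident intent: invoke Lemma~\ref{prop-convex-hull-contains-closer} to handle the only nontrivial case. If $q \in \mathrm{conv}(V)$ and $q$ happens to equal some $v_i$, then $q = v_i \in B(v_i, r_i)$ and we are done. Otherwise, I proceed by the following clean argument. Since $p$ lies in all the balls, $p \in \bigcap_i B(v_i, r_i)$; I want to show $q \in \bigcup_i B(v_i,r_i)$. Suppose not: $d(q, v_i) > r_i \ge d(p, v_i)$ for every $i$. Then every vertex $v_i$ is strictly closer to $p$ than to $q$, so $p \ne q$ and all $v_i$ lie strictly inside the open halfspace $S = \{x : \|x - p\| < \|x - q\|\}$ bounded by the perpendicular bisector of $\overline{pq}$. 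As $S$ is convex, $\mathrm{conv}(V) \subseteq S$, hence $q \in S$, i.e. $\|q - p\| < \|q - q\| = 0$, a contradiction. (Here the key point is that the inequality is with respect to the \emph{fixed} pair $p,q$, so the halfspace $S$ is the same for all vertices, and $q \in S$ is genuinely absurd.) The main obstacle is exactly getting this halfspace argument set up with the right comparison so that landing $q$ inside $S$ yields the contradiction $\|q-p\| < 0$; once the comparison $d(q,v_i) > d(p,v_i)$ is in hand for all $i$, the rest is a one-line convexity observation, and Lemma~\ref{prop-convex-hull-contains-closer} provides the backup reasoning if one prefers to phrase it via the nearest-point projection instead.
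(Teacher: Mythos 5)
Your final paragraph gives a correct proof, and it is genuinely different from --- and considerably simpler than --- the paper's. You argue by contradiction: if some $q \in \mathrm{conv}(V)$ lay outside every ball, then $d(q,v_i) > r_i \ge d(p,v_i)$ for all $i$, so every center $v_i$ lies strictly in the open halfspace $S = \{x : \|x - p\| < \|x - q\|\}$ bounded by the perpendicular bisector of $\overline{pq}$; since $S$ is convex, $\mathrm{conv}(V) \subseteq S$, hence $q \in S$, i.e.\ $\|q-p\| < \|q-q\| = 0$, a contradiction. The paper instead reduces via Carath\'eodory's theorem to at most $d+1$ centers, invokes Lemma~\ref{prop-convex-hull-contains-closer} to get a common point inside that sub-hull, partitions the simplex into sub-simplices sharing that point as apex, and runs a height comparison between two simplices over a common face to extract the needed vertex. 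Your halfspace argument reaches the same conclusion in one convexity step, without Carath\'eodory, without Lemma~\ref{prop-convex-hull-contains-closer}, and without any dimension-dependent decomposition; it is the cleaner route. One editorial remark: most of your write-up consists of dead ends that you correctly diagnose as such (affine functionals evaluated at $q$ give no contradiction when the vertex inequality has the wrong orientation); only the last paragraph is the actual proof, and the reason it works is precisely that the contradiction hypothesis places the vertices on the $p$-side of the bisector, so $q \in S$ forces $\|q-p\| < 0$ rather than the vacuous $0 < \|q-p\|$ you kept running into earlier.
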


\begin{proof}
	Let $B := \cup_i B(v_i,r_i)$, let $C := conv(v_1,\dots,v_k)$,
	and let $q$ be an arbitrary point in $C$.
	We will prove that $q$ is in at least one of the component balls in $B$.
	By Carath\'{e}odory's Theorem, there is some subset of at most $d+1$ centers
	$v_1,\dots,v_{d+1}$ (relabeled without loss of generality) such that
	$q \in conv(v_1,\dots,v_{d+1})$.
	Let $C' = conv(v_1,\dots,v_{d+1}) \subseteq C$.

	By Lemma~\ref{prop-convex-hull-contains-closer}, there is some point $p$
	contained in each of the $d+1$ balls which is in $C'$.
	We can partition $C'$ into $d+1$ closed convex subsets by replacing each of
	its vertices $v_i$ in turn with $p$.
	These subsets are $d$-simplexes with $d$ ball centers and $p$ as their $d+1$
	vertices.
	Observe that $q$ must fall into one of these subsets (or more, if it falls on
	a boundary).
	Let $P$ be one such $d$-simplex containing $q$, and let $F$ be the face of the
	simplex formed by the $d$ ball centers.
	Since $q$ lies in $P$, another simplex $Q$ can be formed using the same face
	$F$ but with apex $q$ instead of $p$.
	Call the heights of simplexes $P$ and $Q$ the distances from points $p$ and
	$q$ to their respective closest points in $F$, and observe that the height of
	$Q$ is no greater than the height of $P$.
	Since $Q \subseteq P$, they share the same face $F$,
	and the height of $Q$ is less than or equal to the height of $P$,
	there must be some vertex $v_i$ of $F$ such that $d(v_i, q) \le d(v_i, p)$.
	Thus, any $q \in C$ is contained in $B$.
\end{proof}

The following corollary provides a necessary condition for points in the convex
hull of a set which we use for convex hull estimation.
\begin{corollary}[convex hull rankings]\label{prop-convex-hull-rankings}
	Let $V = \{v_1, \dots, v_k\}$ be an arbitrary set of points in Euclidean
	space, and let points $p$ and $q$ be arbitrary points in $conv(V)$.
	Then there is some $v \in V$ such that $d(v,p) \le d(v,q)$.
\end{corollary}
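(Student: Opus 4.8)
The plan is to derive this as an immediate specialization of Theorem~\ref{prop-ball-covering}, choosing the ball radii so that the point $q$ becomes the common point of the family. First I would fix $q$ and, for each vertex $v_i \in V$, define the closed ball $B_i := B(v_i, d(v_i,q))$ centered at $v_i$ with radius exactly equal to the distance from $v_i$ to $q$. By construction, $q$ lies on the boundary of each $B_i$, hence in each (closed) $B_i$; so the family $\{B_1,\dots,B_k\}$ has the point $q$ in common.

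Next I would apply Theorem~\ref{prop-ball-covering} to this family: since the balls share at least one common point, their union contains $conv(\{v_1,\dots,v_k\}) = conv(V)$. Because $p \in conv(V)$ by hypothesis, $p$ must belong to at least one of the balls, say $p \in B_j$. Unwinding the definition of $B_j$, membership $p \in B(v_j, d(v_j,q))$ is precisely the statement $d(v_j,p) \le d(v_j,q)$, which gives the claim with $v = v_j$.

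I do not expect any real obstacle here, since all the geometric work has already been done in Theorem~\ref{prop-ball-covering} (itself resting on Lemma~\ref{prop-convex-hull-contains-closer} and Carath\'{e}odory's Theorem); this corollary is just the case where the radii are tuned to make a distinguished point common to every ball. The only minor point worth noting is that $q$ need not be one of the $v_i$, nor distinct from $p$, but the radii $d(v_i,q)$ are well-defined in any case and the argument is unaffected. This is exactly the necessary condition for convex-hull membership that is used to justify the estimator $\widehat{conv}$ in Equation~\ref{eq-conv-estimate}: every point of $conv(V)\cap X$ survives the intersection over all choices of the witness point.
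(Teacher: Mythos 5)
Your proof is correct and takes essentially the same approach as the paper: both fix $q$, form the family of balls $B(v_i, d(v_i,q))$ sharing the common point $q$, invoke Theorem~\ref{prop-ball-covering} to cover $conv(V)$, and conclude that $p$ lies in some $B(v_j, d(v_j,q))$, i.e., $d(v_j,p) \le d(v_j,q)$.
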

\begin{proof}
	By Theorem~\ref{prop-ball-covering}, the union of balls centered on the
	members of $V$ whose radii extend to $q$ contain $conv(V)$. Since $p$ is
	inside $conv(V)$, it must be within at least one of these balls.
\end{proof}

\subsection{Convex Hull Estimation}
\label{app-convex-hull}

Given an arbitrary set of points $X \subset \mathbb{R}^d$
and any subset $V \subset X$,
we can use Theorem~\ref{prop-ball-covering} to identify members of $X$ which
are close to $conv(V)$ in the sense that they are either in $conv(V)$ or
close to the boundary of $conv(V)$.

For any point $x \in X$, define the set
$C(x) := \cup_{v \in V} \{ y \in X : d(v,y) \le d(v,x) \}$
as an estimate of the convex hull of $V$.
By Theorem~\ref{prop-ball-covering}, $C(x) \subseteq conv(V)$.
However, any individual estimate $C(x)$ will tend to contain many false
positives.
We can form a better estimate $\hat{C} := \cap_{x \in X} C(x)$.
The following theorem shows that the false positives of this estimate contain
only points which are close to the boundary of $conv(V)$.

%
% THEOREM: Error bound on convex hull estimation
%
We now prove Theorem~\ref{prop-chull-errors}.

\begin{proof}
By construction, for any points $c \in \hat{C}$ and $x \in X$
there is a vertex $v \in V$ such that $d(v,c) \le d(v,x)$.
However, by Lemma~\ref{prop-convex-hull-contains-closer} we know that there are
points in $conv(V)$ which are closer to any $v \in V$ than any member of
$\hat{C}$ which is not in $conv(V)$.
Any point in $\hat{C}$ which is not in $conv(V)$ is there because none of
these points is contained in $X$.

Let $p \in \hat{C}$ be an arbitrary false positive, not contained in $conv(V)$,
and let $q$ be the closest point in $conv(V)$ to $p$.
Note that since $q \notin X$, we know that $q$ is not a member of $V$.
This means that we tend not to make mistakes ``close to the corners'' of
$conv(V)$.

For any vertex $v \in V$ let $r_v := d(v,p)$ be its distance to $p$.
Define the set of points closer to all members of $V$ than $p$ as
\begin{align}
E_p \equiv \cap_{v \in V} B(v,r_v).
\end{align}
We know $E_p$ contains no members of $X$ because $p$ is a false positive.
Since $E_p$ is an intersection of balls, 
when $d(p,q)$ is greater all the radii $r_v$ are also greater and
the size of $E_p$ is greater in all dimensions.

If $d(p,q)$ was sufficiently large, $E_p$ would contain a ball of radius
$\epsilon$ and would thus contain a member of $X$, forming a contradiction.
The remainder of the proof establishes an upper bound on $d(p,q)$ under the
assumption that an $\epsilon$-ball centered at $q$ is not contained in $E_p$.
Refer to Figure~\ref{fig-chull-errors} for a diagram of the following argument.

\begin{figure}[htb]
\centering	
\begin{tikzpicture}
	\coordinate (v) at (-4, -4);
	\coordinate (p)  at (0, 2);
	\coordinate (q) at (0, 0);
	\coordinate (e) at (-4, 0);
	\coordinate (f) at (1.09, 1.09);
	
	\filldraw
		(v) circle [radius=2pt] node [anchor=east] {$v$}
		(p) circle [radius=2pt] node [anchor=east] {$p$}
		(q) circle [radius=2pt] node [anchor=north west] {$q$}
%		(e) circle [radius=2pt] node [anchor=east] {$e$}
		(f) circle [radius=2pt] node [anchor=west] {$f$}
		;

	\draw [style=dashed]
		(p) -- node [anchor=west] {$h$} (q)
		(q) -- (e)
		(q) -- node [anchor=north west] {$<\epsilon$} (f)
		(v) -- (e)
%		(v) -- node [anchor=east] {$b$} (e)
		(v) -- node [anchor=south east] {$r_v$} (p)
%		(v) -- node [anchor=north west] {$r_v$} (f)
		(v) -- node [anchor=north west] {$w_v$} (q)
		;
	\draw [right angle symbol={e}{v}{q}];
	\draw [right angle symbol={q}{e}{p}];
	\draw [style=dashed]
    	let \p1=($(p)-(v)$),
		    \n1={atan(\y1/\x1)},
		    \p2=($(f)-(v)$),
		    \n2={atan(\y2/\x2)},
		    \n3={veclen(\x1,\y1)}
		in
	    (p) arc (\n1:\n2:\n3);

\end{tikzpicture}
\caption{Our construction for Theorem~\ref{prop-chull-errors}.
We want to maximize $h$ such that $d(q,f) < \epsilon$.
$v\in V$ is a convex hull vertex, and $p \in \hat{C}$ a false
positive.
$q$ is the closest point in $conv(V)$ to $p$, and point $f$
is collinear with $vq$.
We have $h=d(p,q), r_v = d(v,p) = d(v,f)$, and $z=d(v,q)$.
By assumption, $d(q,f) < \epsilon$.
Note that sometimes $v$ and $f$ are on the same face of $conv(V)$ as $q$,
implying that $w=z$.
}
\label{fig-chull-errors}
\end{figure}
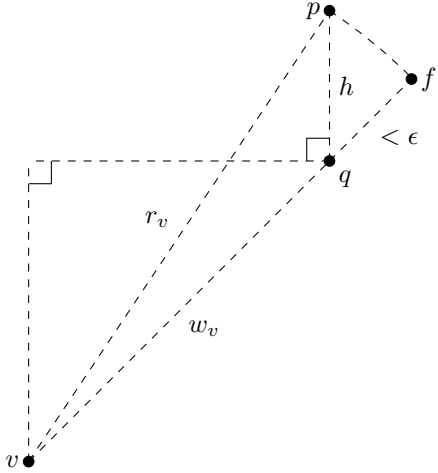

Let $v \in V$ be an arbitrary vertex, not necessarily
on the same face of $conv(V)$ as $q$.
We know that $r_v$ is not large enough that $E_p$ can contain an
$\epsilon$-ball centered at $q$,
so a line segment passing from $v$ to $q$ with length $r_v$ does not extend
by $\epsilon$ or more past $q$.

Let $f$ be the point along the line from $v$ to $q$ at distance $r_v$ from $v$.
We seek the minimum height $h := d(p,q)$ such that $d(q,f) < \epsilon$,
because at this distance an $\epsilon$-ball would (perhaps) fit in $E_p$.

We also define $w_v$ as the distance from $v$ to $q$.

All the points are coplanar, so we proceed using the Pythagorean theorem.
$q$ is on the boundary of $conv(V)$ and $p$ is outside of $conv(V)$, so we
know that $\angle vqp \ge 90\degree$, where we have equality when $v$ is on
the face of $conv(V)$ containing $q$.
In $\triangle vqp$, since $\angle vqp \ge 90\degree$ and $v \ne q$ we have
the following.
\begin{align}
h^2 &\le r_v^2 - w_v^2 \\
&< (w_v + \epsilon)^2 - w_v^2 \\
&= \epsilon^2 + 2w_v\epsilon \\
\implies h &< \sqrt{\epsilon(\epsilon + 2w_v)}
\end{align}

%$h^2 \le r_v^2 - z^2$. % Pythagorus for > right angles
%From the right triangles $\triangle v e f$ and $\triangle v e q$,
%we know that $r_v^2 < b^2 + (w + \epsilon)^2$ and
%$z^2 = b^2 + w^2$, respectively.
%This leads to the bound
%$h^2 < b^2 + (w + \epsilon)^2 - (b^2 + w^2)$, which simplifies to
%%$h^2 < \epsilon^2 + 2 w \epsilon$.
%$h < \sqrt{ \epsilon^2 + 2w\epsilon }$.

In order to find $h$ such that an $\epsilon$-ball fits in $E_p$, we need
to choose the largest bound on $h$ for any vertex.
For a fixed $X$ we have a fixed $\epsilon$, and the bound scales with
$\sqrt{w_v}$.
It is easy to show that the largest $w_v \le m$, where $m = diam(V)$ is
the maximum pairwise distance between the members of $V$.
This leads to the final bound on $h$.
\end{proof}

\subsection{Basis Quality with high density}

\paragraph {Proof of Theorem~\ref{perfect-embed-epsilon}.}
Let $x,y \in X$ be the embedded (ordinal) positions of arbitrary points in $X$,
let $x_i,y_i$ denote their ordinal coordinates along axis $A_i$, and
let $x'_i,y'_i$ denote the true positions of their projections along the true
(Euclidean) axis between the endpoints of $A_i$.

We first consider the scaling constant $s$, in order to prepare to map between
positions and distances in our ordinal space and in the underlying Euclidean
space.
If the axis points' projections onto the true axes are evenly-spaced, 
then any $\epsilon$-ball centered on the axis will contain exactly $k$ points
and $s = 2 \epsilon/k$ maps from ordinal coordinates to Euclidean coordinates,
i.e., for any $x$ on axis $A_i$,
\begin{align}
	x'_i = s x_i = \frac{2 \epsilon}{k}.
\end{align}

When the axis points' projections are not evenly-spaced, we will have between
one and $k$ points in each $\epsilon$-ball centered on the axis, so we have
that
\begin{align}
	\frac{2\epsilon}{k} x_i \le x'_i \le 2\epsilon x_i.
\end{align}
We will choose $s = 2 \epsilon$ and have that $(s x_i)/k \le x'_i \le s x_i$
for any $x_i$ on axis $A_i$.

For the first claim,
assume that for some $x \in X$ the true position of the embedding coordinate
$x_i$ is more than $\epsilon$ away from the projection $x'_i$.
Since $x_i$ is the closest point inside the lens with apex $x$ formed with
centers in axis endpoints, it means a ball of radius $\epsilon$ fits around the
projection $x'_i$.
Such a ball must include a point $p \in X$, but then $p$ being inside this
$\epsilon$-ball must be in the axis set (as it is between other axis points,
not above them) --- contradiction.
So we have that the true distance between the axis point with index $x_i$
and $x'_i$ is at most $\epsilon$,
and its scaled coordinate $s x_i$ is bounded by
\begin{align}
x'_i \le s x_i + \epsilon, \\
x'_i \ge \frac{s}{k}x_i - \epsilon.
\end{align}

For the second claim, using the bounds from the first claim and assuming
orthogonal axes, we have
\begin{align*}
& (s/k)^2 \cdot \widehat{dist}^2(x,y) \\
= & \sum_{i=1}^d{\left( (s/k)|x_i - y_i| \right)^2}
\\
\leq &\sum_{i=1}^d{(|x'_i - y'_i| + 2\epsilon )^2}
\\
= &\sum_{i=1}^d{|x'_i - y'_i|^2}
	+ 4\epsilon\sum_{i=1}^d{|x'_i - y'_i|}
	+ 4\epsilon^2d
\\
\leq &dist^2(x,y)
	+ 4\epsilon \sqrt{d\sum_{i=1}^d{|x'_i - y'_i|^2}}
	+ 4\epsilon^2d
	&& (*)
\\
= &dist^2(x,y) + 2 \left(dist(x,y) 2\epsilon\sqrt{d}\right)  + 4\epsilon^2d
\\
= &\left(dist(x,y) + 2\epsilon\sqrt{d}\right)^2
\\
\Rightarrow &s \cdot \widehat{dist}(x,y)
	\leq k( dist(x,y) + 2\epsilon\sqrt{d} )
\end{align*}
where $(*)$ follows because the arithmetic mean is less than the square mean.
To summarize, the distances between two points can be considered functions of
their distances along their projections onto the axes, which are correct to
within the specified tolerance.

By the same argument starting with $(1/s) dist(x,y)$ we can show
$dist(x,y) \leq s \cdot \widehat{dist}(x,y) + 2\epsilon\sqrt{d}$ which
concludes the proof.

\subsection{Approximate Basis Quality}
\label{app-basis}

In this section, we provide proofs related to the quality of the approximate
basis found by our algorithm, and its dimensionality estimate.

We begin with the dimensionality estimate, as this drives the upper bound for
our algorithm.

\begin{theorem}[dimensionality estimate]\label{prop-dim-converges}
	Let $\mathcal{X} = \{x_1,x_2,\dots\}$ be an infinite sequence of i.i.d. draws
	from some smoothly-continuous distribution over a simply connected
	compact subset
	$V \subset \mathbb{R}^d$.
	Also let $X_n = \{x_1,\dots,x_n\}$ be the first $n$ draws in $\mathcal{X}$,
	and let $\hat{d}_n$ be the number of axes chosen by
	Algorithm~\ref{fig-alg-basis} when the oracle answers consistently with
	distances between the points in $X_n$.
	Then $\hat{d}_n \le d$ for all $n$, and as
	$n \rightarrow \infty, \hat{d} \rightarrow d$.
\end{theorem}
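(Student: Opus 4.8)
The plan is to prove the two claims separately: first that $\hat{d}_n \le d$ always (a deterministic geometric fact), and then that $\hat{d}_n \to d$ almost surely as $n\to\infty$ (using density of the sample).

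For the upper bound $\hat{d}_n \le d$: The algorithm stops adding axes precisely when the affine-independence test on Line 24 fires, i.e. when $\widehat{conv}(P) = \bigcup_{z\in P}\widehat{conv}(P\setminus\{z\})$ for the set $P$ of size $\hat d + 2$ (consisting of the two endpoints of the first axis, one endpoint of each subsequent axis, and the new candidate $p$). I would argue the contrapositive: if the algorithm has already chosen $d+1$ affinely-independent directions... actually more carefully, suppose $\hat{d}_n = d+1$ axes were somehow accepted. Then $|P| = d+3 > d+1$, and by Carathéodory's Theorem every point of $conv(P)$ lies in the convex hull of some $(d+1)$-subset of $P$, hence in $conv(P\setminus\{z\})$ for at least one $z \in P$ (since $d+3 > d+1$ means dropping one point still leaves $\ge d+2 \ge d+1$ points available). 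Since $\widehat{conv}$ is monotone under containment of the underlying sets intersected with $X$ and $conv(P\setminus\{z\})\cap X \subseteq \widehat{conv}(P\setminus\{z\})$ (shown before Theorem~\ref{prop-chull-errors}), and $\widehat{conv}(P) \subseteq \widehat{conv}(P\setminus\{z\})$ is false in general — so I need the reverse inclusion too. The key point: $\widehat{conv}(P) \supseteq \bigcup_z \widehat{conv}(P\setminus\{z\})$ never holds strictly because intersecting over more constraints... hmm, actually $\widehat{conv}$ is antitone in $P$ in one sense. I would instead observe that $conv(P) = \bigcup_z conv(P\setminus\{z\})$ holds exactly when $P$ is affinely dependent (this is the geometric content of Carathéodory at the "one point too many" threshold), so whenever $\hat d + 2 > d+1$, i.e. $\hat d > d-1$, the true hulls coincide, and since the $\widehat{conv}$ estimates contain the true hulls and (by Theorem~\ref{prop-chull-errors}) add only boundary points, the estimated identity $\widehat{conv}(P) = \bigcup_z \widehat{conv}(P\setminus\{z\})$ must also hold — triggering the break before a $(d+1)$-st axis is finalized. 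So $\hat d_n \le d$.

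For the convergence $\hat d_n \to d$: As $n\to\infty$, since the draws are i.i.d.\ from a distribution with full support on the simply-connected compact $V\subset\mathbb{R}^d$, the density constant $\epsilon_n$ (radius of the largest empty ball in $conv(X_n)$) tends to $0$ almost surely. By Theorem~\ref{prop-chull-errors}, every false positive in any $\widehat{conv}(A)$ lies within $\sqrt{\epsilon_n(2m+\epsilon_n)} \to 0$ of $conv(A)$, so $\widehat{conv}$ converges (in Hausdorff distance, restricted to $X_n$) to the true convex hull. I would then argue that for small enough $\epsilon_n$ the algorithm's axis-selection never breaks prematurely at Lines 13 or 24 until it has found $d$ axes: (i) the "above" relation and the lens/apex construction (Section~\ref{sec-basis-axes}) identify, for any set $P$ of fewer than $d$ affinely-independent endpoints, a point of $X_n$ that is above some point of $\widehat{conv}(P)$ — because $conv(P)$ is then a lower-dimensional set, its complement within $V$ is nonempty and open, and a dense sample hits it with a point whose apex-lens genuinely contains hull points; (ii) the affine-independence test does \emph{not} fire while $|P| = \hat d + 2 \le d+1$, because the true hulls $conv(P)$ and $\bigcup_z conv(P\setminus\{z\})$ differ on a set of positive measure (the "interior" of the $\hat d$-simplex manifold), and by the Hausdorff convergence of $\widehat{conv}$ this discrepancy is witnessed by an actual sample point once $\epsilon_n$ is small enough. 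Combining, for all sufficiently large $n$ the algorithm runs until exactly $d$ axes are chosen, so $\hat d_n = d$ eventually.

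\textbf{The main obstacle} I expect is part (ii) of the convergence argument: making rigorous that the affine-independence test fails to trigger prematurely. One must quantify how far the "extra faces" $conv(P\setminus\{z\})$ of the $\hat d$-dimensional simplex sit from points genuinely in the relative interior of $conv(P)$, show this gap is bounded below by a constant depending only on the geometry of $V$ and the (well-separated) endpoints the algorithm chooses — not shrinking to zero — and then invoke Theorem~\ref{prop-chull-errors} to conclude that once $\sqrt{\epsilon_n(2m+\epsilon_n)}$ drops below that gap, some sampled witness point lands in $\widehat{conv}(P) \setminus \bigcup_z \widehat{conv}(P\setminus\{z\})$. Controlling the endpoint separation (so that $m$ stays bounded and the simplices are non-degenerate) requires leaning on the apex/above-ness selection rule, and this is where the argument is most delicate; the $\hat d_n \le d$ direction, by contrast, is a clean consequence of Carathéodory.
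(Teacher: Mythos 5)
Your proposal takes essentially the same route as the paper's proof: Carath\'eodory's theorem for the upper bound $\hat{d}_n \le d$ (once $|P| = \hat{d}+2$ exceeds $d+1$, the true hull of $P$ equals the union of the hulls of its $(|P|-1)$-subsets, so the test fires), and for the limit $\hat{d}_n \to d$ the fact that $\epsilon_n \to 0$ forces $\widehat{conv} \to conv$ via Theorem~\ref{prop-chull-errors}, so that a genuinely affinely-independent candidate will survive the test for $n$ large enough. You are somewhat more explicit than the paper about the two places where the argument is not fully airtight: (a) the upper bound tacitly assumes that the estimated equality $\widehat{conv}(P) = \bigcup_z \widehat{conv}(P\setminus\{z\})$ follows from the true one $conv(P) = \bigcup_z conv(P\setminus\{z\})$ --- both you and the paper assert this without a real proof (the $\supseteq$ direction is trivial from monotonicity of $\widehat{conv}$ in $P$, but the $\subseteq$ direction requires that a single witness $z$ can be chosen uniformly over all $q\in X$, which is not immediate); and (b) the convergence direction needs a uniform lower bound on the ``gap'' between the $\hat{d}$-simplex and its proper faces, which you correctly flag as the delicate point and which the paper likewise leaves at the level of ``if $\epsilon$ is small enough, there will be at least one point\dots''. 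So: same decomposition, same key lemmas (Carath\'eodory, Theorem~\ref{prop-ball-covering}, Theorem~\ref{prop-chull-errors}), same level of rigor, with your write-up a bit more candid about where the remaining work lies.
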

\begin{proof}
	Since $V$ is bounded and simply connected and is fully supported by the
	distribution, as
	$n \rightarrow \infty$ the radius $\epsilon$ of the largest empty ball in
	$conv(V)$ converges to zero.
	By Theorem~\ref{prop-chull-errors}, this causes our convex hull estimates
	$\widehat{conv}(P) \rightarrow conv(P)$ for any subset $P \subset X_n$.
	For the remainder of the proof, let $P$ and $Q$ be the sets of axis endpoints
	selected by the algorithm in a given iteration of axis selection.

	In some iteration of the algorithm, let $A$ be the set of axis endpoints
	already chosen, and let $p$ be the next axis endpoint selected.
	We choose $p$ as the point ``above'' at least one point and above more points
	than any other candidate.
	It follows from Theorem~\ref{prop-ball-covering} that since $p$ is above some
	other point $q$, it does not lie in $conv(A)$.

	If $p$ is not affinely independent of the members of $A$, the algorithm
	terminates.
	This is because $\widehat{conv}(Q)$ will consist of the union of two
	simplexes:
	one with all vertices in $\widehat{conv}(P)$, and one with $p$ as one vertex
	and all members but one of $P$ as the other vertices.
	The algorithm rejects the new axis if the union of all possible such vertices 
	equals $\widehat{conv}(Q)$.
	Since at most $d+1$ points can be affinely independent in $\mathbb{R}^d$,
	this implies that $\hat{d}_n \le d$ for any $n$.

	If we have selected fewer than $d$ axes, there will be some point $p$ which
	is affinely independent of $A$.
	For sufficiently large $n$, any such $p$ will be above points in
	$\widehat{conv}(Q)$.
	The union of simplexes which we compare to $\widehat{conv}(Q)$ will consist of
	faces of a higher-dimensional simplex.
	If $\epsilon$ is small enough, there will be at least one point in this
	simplex which is not in the union of convex hull estimates, so the algorithm
	will not terminate if $p$ is selected.

% TODO: consider Thm. 4 proof adjustment
%	\todo[inline]{Prove that some affinely-independent $p$ will be selected, even
%		if there is an affinely-dependent option.}
\end{proof}

\end{document}